\def\cro{\mathop{\rm cr}\nolimits}
\def\qed{ \ \vrule width.2cm height.2cm depth0cm\smallskip}
\spnewtheorem{observation}{Observation}{\bfseries}{\rmfamily}
\begin{document}
\title{Hanani-Tutte for Radial Planarity II~\thanks{Missing proofs can be found in the appendices.}}

\author{Radoslav Fulek\inst{1}\thanks{The research leading to these results has received funding from the People Programme (Marie Curie Actions) of the European Union's Seventh Framework Programme (FP7/2007-2013) under REA grant agreement no [291734].} and Michael Pelsmajer\inst{2} and Marcus Schaefer\inst{3}}

\institute{
IST Austria, Am Campus 1, Klosterneuburg 3400, Austria\\
\email{radoslav.fulek@gmail.com}
\and
Illinois Institute of Technology, Chicago, Illinois 60616, USA\\
\email{pelsmajer@iit.edu}
\and
DePaul University, Chicago, IL 60604, USA\\
\email{mschaefer@cs.depaul.edu}
}

\maketitle

\begin{abstract}
A drawing of a graph $G$ is {\em radial} if the vertices of $G$ are placed on concentric circles $C_1, \ldots, C_k$ with common center $c$, and edges are drawn {\em radially}: every edge intersects every circle centered at $c$ at most once. $G$ is {\em radial planar} if it has a radial embedding, that is, a crossing-free radial drawing. If the vertices of $G$ are ordered or partitioned into ordered levels (as they are for leveled graphs), we require that the assignment of vertices to circles corresponds to the given ordering or leveling. A pair of edges $e$ and $f$ in a graph is {\em independent} if
$e$ and $f$ do not share a vertex.

We show that a graph $G$ is radial planar if $G$ has a radial drawing in which every
two independent edges cross an even number of times; the radial embedding has the same leveling as the radial drawing.
In other words, we establish the {\em strong} Hanani-Tutte theorem for radial planarity. This characterization
yields a very simple algorithm for radial planarity testing.
\end{abstract}

\section{Introduction}

This paper continues work begun in ``Hanani-Tutte for Radial Planarity''~\cite{FPS15} by the same authors; to make the current paper self-contained we repeat some of the background and terminological exposition from the previous paper.

In a leveled graph every vertex is assigned a level in $\{1, \ldots, k\}$.  A radial drawing visualizes the leveling of the graph
by placing the vertices on concentric circles corresponding to the levels of $G$.
Edges must be drawn as {\em radial} curves: for any circle concentric with the others, a radial curve intersects the circle at most once.
A leveled graph is {\em radial planar} if it admits a
radial embedding, that is, a radial drawing without crossings.
The concept of {\em radial planarity} generalizes
 {\em level planarity}~\cite{BN88} in which levels are parallel lines instead of concentric circles (radially-drawn edges are replaced with {\em $x$-monotone edges}).

We previously established the weak Hanani-Tutte theorem for radial planarity: a leveled graph $G$ is radial planar if it has a radial drawing (respecting the leveling)
in which every two edges cross an even number of times~\cite[Theorem 1]{FPS15}.
Our main result is the following strengthening of the weak Hanani-Tutte theorem for radial planarity, also generalizing the strong Hanani-Tutte theorem for level-planarity~\cite{FPSS12}:

\begin{theorem}\label{thm:radiallStrong}
If a leveled graph has a radial drawing in which every two independent edges cross an even number of times, then
it has a radial embedding.
\end{theorem}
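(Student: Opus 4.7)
The plan is to reduce Theorem~\ref{thm:radiallStrong} to the weak radial Hanani--Tutte theorem of~\cite[Theorem~1]{FPS15} by showing that any radial drawing in which every two independent edges cross evenly can be redrawn, respecting the leveling, into a radial drawing in which \emph{every} pair of edges crosses evenly. Once that is achieved, Theorem~1 of~\cite{FPS15} immediately produces a radial embedding on the same leveling.

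I would carry out the reduction vertex by vertex. For each vertex $v$, the goal is to modify the drawing inside a thin annular neighborhood of the level-circle of $v$ so that every two edges in the star of $v$ cross evenly, while preserving the parity of $\cro(e',f')$ for every pair of edges $\{e',f'\}$ not both incident to $v$. The natural atomic move is a local swap of initial edge-segments at $v$: passing the initial piece of an edge $e\ni v$ across the initial piece of another edge $f\ni v$ flips the parity of $\cro(e,f)$. For any third edge $g$ that is independent of both $e$ and $f$ (hence crosses each of $e,f$ evenly by the hypothesis of the theorem), such a swap changes $\cro(e,g)$ and $\cro(f,g)$ by the same parity, so these effects cancel in pairs and no parity is broken outside the star of $v$.

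The main obstacle is realizing this swap operation \emph{radially}. In the classical planar setting one may freely push edge-ends around $v$ by an arbitrary local homotopy; here every arc must remain monotone with respect to the concentric circles, so the modification is forced to live inside the annulus bounded by the two level-circles adjacent to the level of $v$, and must agree with the existing drawing on the boundary of that annulus. The technical heart of the argument will therefore be a radial analogue of the ``rotation at $v$'' lemma: given the parity pattern of crossings among the edges in the star of $v$, construct a radial drawing of this star on the prescribed annulus, with the endpoints on the two bounding circles in their given cyclic order, that realizes any prescribed compatible parity pattern. Up-edges and down-edges of $v$ must be treated simultaneously, and one must carefully track how changes in the cyclic rotation at $v$ interact with the rest of the drawing; this is where the radial constraint makes the problem distinctly harder than its straight-line planar cousin.

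Once this fix-at-$v$ operation is available, it is applied to every vertex in turn. The invariant is that fixing $v$ only alters parities of crossings between pairs of edges both incident to $v$, so vertices already processed remain corrected. After processing all vertices, the resulting radial drawing has every pair of edges crossing evenly, and \cite[Theorem~1]{FPS15} closes the argument.
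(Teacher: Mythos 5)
There is a genuine gap, and it sits exactly where you place the ``technical heart'': the claim that, at each vertex $v$, the star of $v$ can be redrawn radially so that all pairs of edges at $v$ cross evenly, without touching anything else. With the other endpoints and the rest of the drawing fixed, the only freedom you have at $v$ is to permute the ends of the upper (resp.\ lower) edges in the rotation at $v$; each transposition of two consecutive ends flips the crossing parity of exactly that pair. Hence ``all pairs at $v$ even'' is achievable if and only if the set of oddly-crossing pairs at $v$ is the inversion set of some permutation of the current rotation. It need not be: if $e$ and $f$ cross oddly while some edge $g$ lying between them in the rotation crosses both evenly, no permutation reverses the relative order of $e,f$ without reversing one of the pairs $(e,g)$, $(g,f)$, so this parity pattern is not an inversion set and the local fix is impossible. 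This configuration is not exotic---it is precisely the configuration (an odd pair $e,f$ of minimum distance with an even edge $g$ between them) that the actual proof must confront. Already for the $x$-monotone case~\cite{FPSS12} the strong theorem cannot be reduced to the weak one by drawing-only local moves: when the flips fail there is a structural obstruction (a component of $G\setminus\{v,w\}$ squeezed between $v$ and $w$ and attached to both, or a multi-edge $vw$), and these are eliminated by graph surgery and induction on a minimal counterexample, not by redrawing the star.

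In the radial setting the situation is strictly worse, which is why the paper proves the stronger Theorem~\ref{thm:radialStrong} (with the ``supported by'' condition) rather than Theorem~\ref{thm:radiallStrong} directly: when the flips fail at $v$ there may be \emph{no} obstruction of the above kind at all, and this can happen when $v$ is the first or last vertex; that case is handled by deleting $v_1,v_n$, attaching pendant edges (Lemma~\ref{thm:pendant}), and invoking the winding-parity-preserving even-drawing result (Theorem~\ref{thm:radial}), while the case $v\neq v_1,v_n$ requires the long argument with essential cycles, the component lemmas (Lemmas~\ref{lem:1}--\ref{lem:2}), and Lemma~\ref{lem:aux1}. Your invariant (``fixing $v$ only alters parities of pairs both incident to $v$'') is fine, but the per-vertex step it is meant to protect does not exist in general, so the reduction to Theorem~\ref{thm:radiall} collapses. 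Any correct proof along these lines must say what to do when the parity pattern at $v$ is not realizable by reordering the rotation---and that is essentially the entire content of the paper's proof.
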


The weak variant of a Hanani-Tutte theorem makes the stronger assumption that {\em every} two edges cross an even number of times. Often, weak variants lead to stronger conclusions. For example, it is known that if a graph can be drawn in a surface so that every two edges cross evenly, then the graph has an embedding on that surface with the same rotation system, i.e.\ the cyclic order of ends at each vertex remains the same~\cite{CN00,PSS07}. This, in a way, is a disadvantage, since it implies that the original drawing is already an embedding, so that weak Hanani-Tutte theorems do not help in finding embeddings. On the other hand, strong Hanani-Tutte theorems are often algorithmic. Theorem~\ref{thm:radiallStrong} yields a very simple algorithm for radial planarity testing (described in Section~\ref{sec:Alg}) which is based on solving a system of linear equations over $\mathbb{Z}/2\mathbb{Z}$,
see also~\cite[Section 1.4]{S14}. Our algorithm runs in time $O(n^{2\omega})$, where $n = |V(G)|$ and $O(n^{\omega})$ is the complexity of multiplication of two square $n\times n$ matrices.
Since our linear system is sparse, it is also possible to use Wiedemann's randomized algorithm~\cite{Wie86_sparse}, with expected running time $O(n^4\log^2{n})$ in our case.

Radial planarity was first studied by Bachmaier, Brandenburg and Forster~\cite{BBF05}. Radial and other layered layouts  are a popular visualization tool (see~\cite[Section 11]{THEBOOK},\cite{DGDL14}); early examples of radial graph layouts can be found in the literature on sociometry~\cite{N40}.
Bachmaier, Brandenburg and Forster~\cite{BBF05} showed that radial planarity can be tested, and an embedding can be found, in linear time. Their algorithm is  based on a variant of PQ-trees~\cite{BL76} and is rather intricate. It generalizes an earlier linear-time algorithm for level-planarity testing by J\"unger and Leipert~\cite{JL02}.
Angelini et al.~\cite{angelini2015beyond} devised recently a conceptually simpler algorithm
for radial planarity with running time $O(n^4)$ (quadratic if the leveling is proper, that is, edges occur between consecutive levels only), by reducing the problem to
a tractable case of Simultaneous PQ-ordering~\cite{BlasiusR16}.

We prove Theorem~\ref{thm:radiallStrong} by ruling out the existence of a minimal
counter-example. By the weak Hanani-Tutte theorem~\cite[Theorem 1]{FPS15} a minimal counter-example must contain a pair of independent edges crossing an odd number of times. Thus,~\cite[Theorem 1]{FPS15} serves as the base case in our argument
(mirroring the development for level-planarity). In place of Theorem~\ref{thm:radiallStrong} we establish a stronger version, Theorem~\ref{thm:radialStrong}, which we discuss in Section~\ref{sec:strongRadialkHT}.

We refer the reader to~\cite{FPS15,S13,S14,S03} for more background on the family of Hanani-Tutte theorems, but suffice it to say that strong variants are still rather rare, so we consider the current result as important evidence that Hanani-Tutte is a viable route to graph-drawing questions.

\section{Terminology}
\label{sec:terminology}

For the purposes of this paper, graphs may have multiple edges, but no loops.
An {\em ordered} graph $G = (V,E)$ is a graph whose vertices are equipped with a total order $v_1 < v_2 < \cdots < v_n$. We can think of an ordered graph as a special case of a {\em leveled} graph, in which every vertex of $G$ is assigned a {\em level}, a number in $\{1, \ldots, k\}$ for some $k$. The leveling of the vertices induces a weak ordering of the vertices.

For convenience we represent radial drawings as drawings on a (standing) cylinder. Intuitively, imagine placing a cylindrically-shaped mirror in the center of a radial drawing as described in the introduction.\footnote{Search for ``cylindrical mirror anamorphoses'' on the web for many cool pictures of this transformation.}
The {\em cylinder}  is $\mathcal{C}=I \times \mathbb{S}^1$, where $I$ is the unit interval $[0,1]$ and $\mathbb{S}^1$ is the unit circle.
Thus, a {\em point} on the cylinder is a pair $(i,s)$, where $i\in I$ and $s\in \mathbb{S}^1$.
The {\em projection} to $I$, or $\mathbb{S}^1$ maps $(i,s)\in \mathcal{C}$ to $i$, or $s$.
We denote the projection of a point or a subset $\alpha$ of $I\times \mathbb{S}^1$ to $I$ by $I(\alpha)$ and to $\mathbb{S}^1$ by $\mathbb{S}^1(\alpha)$.
We define a relation to represent relative heights on the standing cylinder model:
for $x,y\in \mathcal{C}$, let $x\le y$ iff $I(x)\le I(y)$.
This allows us to write
$\min I(X)$ and $\max I(X)$ as simply $\min X$ and $\max X$
(for $X\subset \mathcal{C}$),
write $\min X\le x$ to assert that
$\min I(X) \le I(x)$ (for $x\in \mathcal{C}$), etc.

The {\em winding number} of a closed curve  on a cylinder is the number of times the projection to $\mathbb{S}^1$ of the curve
winds around $\mathbb{S}^1$, i.e., the number of times the projection passes through an arbitrary point of $\mathbb{S}^1$
 in the counterclockwise sense minus the number of times the projection passes through the point in the clockwise sense.
A closed curve (or a closed walk in a graph) on a cylinder is {\em essential}\footnote{Note that we define an essential curve slightly differently than usual.} if its winding number is odd. A graph drawn on the cylinder is {\em essential} if it contains an essential cycle.

A {\em radial drawing} $\mathcal{D}(G)$ of an ordered graph $G$ is a drawing of $G$ on the cylinder
in which every edge is {\em radial}, that is, its projection to $I$ is injective (it does not ``turn back''), and
for every pair of vertices $u<v$ we have $I(u)< I(v)$.
In a radial drawing, an {\em upper (lower) edge} at $v$ is an edge incident to $v$ for which $\min e =v $  ($\max e = v$).
A vertex $v$ is a {\em sink} ({\em source}), if $v$ has no upper (lower) edges.
To avoid unnecessary complications, we may assume
that $I(G)$ is contained in the interior of $I$.
We think of $\mathcal{D}$ as a function from $G$ (treated as a topological space) to $\mathcal{C}$.
Thus, $\mathcal{D}(G')$, for $G'\subseteq G$, is a restriction of $G$ to $G'$.

The {\em rotation} at a vertex in a drawing (on any surface) of a graph is the cyclic, clockwise order of the ends of edges
 incident to the vertex in the drawing. The {\em rotation system} is the set of rotations
 at all the vertices in the drawing.
For radial drawings, we define the {\em upper} ({\em lower}) rotation at a vertex $v$ to be the
 linear order of the ends of the upper (lower) edges
 in the rotation at $v$, starting with the direction corresponding to the clockwise orientation
of  $\mathbb{S}^1$. 
The rotation at a vertex in a radial drawing is completely determined by its upper and lower rotation.
 The {\em rotation system} of a radial drawing is the set of the upper and lower rotations at all the vertices
 in the drawing.

For any closed, possibly self-intersecting, curve in the plane (or cylinder), we can {\em two-color} the complement of the curve so that connected regions each get one color and crossing the curve switches colors. A point can appear more than once on the curve, and in such cases the color switches if the closed curve is crossed an odd number of times.  For example, a plane graph (embedding) can have a face bounded by a closed curve that uses an edge $e$ twice (once in each direction); crossing $e$ means crossing the boundary walk twice, so the two-coloring will have the same color on both sides of $e$. If the closed curve is non-essential, the
region incident to $1\times \mathbb{S}^1$ and all other regions of the same color form the {\em exterior} (which includes the region incident to $0\times \mathbb{S}^1$); the remaining regions form the {\em interior} of the curve.

Each pair of edges in a graph drawing crosses {\em evenly} or {\em oddly}
according to the parity of the number of crossings between the two edges.
A drawing of a graph is {\em even} if every pair of edges cross evenly.
A drawing of a graph is {\em independently even} if every two independent edges in the drawing cross an even number of times;
two edges that share an endpoint may cross oddly or evenly.

For any (non-degenerate) continuous deformation of a drawing of $G$, the parity of the number of crossings between pairs of independent edges changes only when an edge passes through a vertex. We call this event an {\em edge-vertex switch}.
When an edge $e$ passes through a vertex $v$, the crossing-parity changes between $e$ and every edge incident to $v$.

\section{Weak Hanani-Tutte for Radial Drawings}
\label{sec:weaRadialkHT}

Let us first recall the weak variant of the result that we want to prove.

\begin{theorem}[Fulek, Pelsmajer, Schaefer~\mbox{\cite[Theorem 1]{FPS15}}]\label{thm:radiall}
If a leveled graph has a radial drawing in which every two edges cross an even number of times, then
it has a radial embedding with the same rotation system and leveling.
\end{theorem}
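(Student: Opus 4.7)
The plan is to adapt the Pelsmajer-Schaefer-Stefankovic (PSS) spanning-tree technique to the cylindrical setting, while preserving both the leveling and the rotation system. Fix a spanning tree $T$ of $G$ and let $\mathcal{D}$ be the given even radial drawing.

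In the first stage I would radialize $T$. Since $T$ is acyclic and $\mathcal{D}(T)$ is even, odd self-crossings among $T$-edges can be removed by pairing crossings between edges sharing an endpoint and performing local Reidemeister-style moves near that vertex; such moves leave the rotation system untouched, and what remains can be straightened level by level because no essential cycle can arise in a tree. In the second stage I push each non-tree edge $e$ off the now-embedded $T$: evenness forces $e$ to cross every tree edge $f$ an even number of times, so I pair these crossings up consecutively along $e$ and replace each inter-crossing arc with a radial detour running in a thin strip alongside $f$. Because $f$ itself is radial, each of its two sides admits such a strip, and the correct side is chosen to keep the detour's projection to $I$ injective; crucially, this operation does not change the crossing parity between $e$ and any other non-tree edge.

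In the third stage I resolve the remaining crossings inside each face of the now-embedded $T$. A radial embedding of $T$ on the cylinder decomposes it into topological discs together with at most one annular ``outer'' face touching both boundary circles. The upper and lower rotations at the boundary vertices specify how the non-tree chords must attach, and since all chords still cross evenly in pairs, the classical planar weak Hanani-Tutte theorem handles each disc face while a short symmetric argument handles the annular face. Gluing these local embeddings together yields the desired radial embedding with the same rotation system and leveling.

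The main obstacle is the second stage: an ordinary PSS push-off is a free ambient isotopy that may easily violate the radial constraint, since a naive detour along a tree edge could turn back. The rescue is that each radial tree edge $f$ supplies a radial collar on either side, so every detour can be realized by a genuinely radial curve. A dedicated argument is still required when $e$ together with the $T$-path joining its endpoints forms an essential cycle on the cylinder: the detour must then match the winding class forced by the parity data and the rotation system at the endpoints of $e$, and verifying that this winding class is always achievable is the technical heart of the proof.
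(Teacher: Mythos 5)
Your proposal takes a completely different route from the paper. The paper's own proof of Theorem~\ref{thm:radiall} is a short reduction: attach a pendant edge to each source or sink that shares its level with other vertices, perturb the vertices of each level to pairwise distinct levels, apply Theorem~\ref{thm:radial} (the ordered version, which preserves the rotation system and the winding-number parity of every cycle), and then undo the perturbation. All of the real work is thereby delegated to Theorem~\ref{thm:radial}, quoted from \cite{FPS15}. Your direct spanning-tree argument would have to reprove exactly that content, and as written it does not.

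Two steps fail concretely. First, in your second stage, the claim that replacing an arc of $e$ between two consecutive crossings with a tree edge $f$ by a strip running alongside $f$ ``does not change the crossing parity between $e$ and any other non-tree edge'' is false: the discarded arc of $e$ may cross other edges arbitrarily, while the detour crosses exactly the edges meeting the subarc of $f$ between the two crossing points, and the evenness of $\cro(e,h)$ and $\cro(f,h)$ gives no control over a proper subarc of $f$. (Radiality of that particular detour is fine; the parity bookkeeping is not, and the standard planar fixes, such as rerouting along the boundary of a neighborhood of the whole tree, destroy radiality at every local extremum of $T$.) Second, your third stage rests on a wrong picture of the faces of $T$: a spanning tree embedded on the cylinder has exactly one face, which touches both boundary circles and is not a disc, so the ``classical planar weak Hanani--Tutte theorem on each disc face'' has nothing to act on; every chord lies in this single face, and that is precisely where the essential-cycle and winding-number difficulties live. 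You concede at the end that realizing the forced winding class is ``the technical heart of the proof,'' but you offer no argument for it---and that heart is exactly the winding-parity-preserving statement of Theorem~\ref{thm:radial} that the paper invokes. A smaller issue: under the hypothesis every pair of edges (adjacent or not) crosses evenly, so there are no odd self-crossings in $T$ to remove; the actual task in your first stage is to eliminate \emph{even} crossings while keeping every edge radial and the rotation system fixed, which is not a local Reidemeister-move argument but essentially the weak theorem restricted to trees.
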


We need a stronger version of this result
that also keeps track of the parity of winding numbers.

\begin{theorem}[Fulek, Pelsmajer, Schaefer~\mbox{\cite[Theorem 2]{FPS15}}]\label{thm:radial}
 If an ordered graph $G$ has an even radial drawing, then it has a radial embedding with the same ordering, the same rotation system, and so that the winding number parity of every cycle remains the same.
\end{theorem}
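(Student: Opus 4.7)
My plan is to derive Theorem~\ref{thm:radial} from Theorem~\ref{thm:radiall} by correcting winding parities after invoking the weaker statement. Apply Theorem~\ref{thm:radiall} to $\mathcal{D}$ to obtain a radial embedding $\mathcal{E}_0$ of $G$ with the same ordering and rotation system as $\mathcal{D}$. Because both drawings share endpoints and share the rotation at every vertex, the parity of the difference between the winding of each edge $e$ in $\mathcal{D}$ and in $\mathcal{E}_0$ is a well-defined quantity $\delta(e) \in \mathbb{Z}/2\mathbb{Z}$. For any cycle $C$, the parity difference between $C$'s winding in $\mathcal{D}$ and in $\mathcal{E}_0$ is $\sum_{e \in C} \delta(e) \pmod{2}$, so Theorem~\ref{thm:radial} reduces to producing a radial embedding $\mathcal{E}$ with the same rotation system as $\mathcal{E}_0$ in which each edge's winding parity differs from its value in $\mathcal{E}_0$ by $\delta$.

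To carry this out, I would fix a spanning tree $T$ of $G$. Using local corkscrew-type reroutings of tree edges — which shift an edge's winding without affecting the rotation system, provided the tree edge has a suitable free side — one can absorb any coboundary contribution into $T$, and thus assume $\delta$ vanishes on $T$. For each non-tree edge $e$ with $\delta(e) = 1$, the strategy is to reroute $e$ so that it wraps one additional time around the cylinder, by sliding $e$ through an essential face of $\mathcal{E}_0$ incident to $e$. The existence of such an essential face should be forced by the combinatorial structure of $\mathcal{E}_0$ together with the cocycle $\delta$: the parity pattern of $\delta$ on non-tree edges combined with the fact that both $\mathcal{D}$ and $\mathcal{E}_0$ share the rotation system should guarantee that the relevant faces are available when needed.

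The main obstacle is maintaining global consistency of the reroutings: each rerouting alters the face structure of the working embedding, potentially eliminating essential faces that are needed for subsequent reroutings of other non-tree edges. I would handle this by processing the non-tree edges in an order induced by a hierarchical decomposition of $\mathcal{E}_0$ along essential cycles, so that each rerouting takes place within an annular region disjoint from the regions used in later reroutings. Verifying that such a decomposition exists, that each rerouting preserves the rotation system, and that the parity shifts can indeed all be absorbed into non-interfering local moves is the technical heart of the proof, and parallels the combinatorial reasoning already developed for the weak statement in~\cite{FPS15}.
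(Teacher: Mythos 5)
There is a genuine gap, and it comes in two parts. First, the logical structure: this paper does not prove Theorem~\ref{thm:radial} at all --- it is imported from the companion paper~\cite{FPS15}, where it is the main technical result, and the dependence between the two weak theorems runs in the opposite direction from the one you use. As stated right after Theorem~\ref{thm:radial} (and carried out in the appendix), Theorem~\ref{thm:radiall} is \emph{derived from} Theorem~\ref{thm:radial} via the construction of~\cite[Section 4.2]{FPSS12}; there is no independent proof of Theorem~\ref{thm:radiall} to start from. So taking Theorem~\ref{thm:radiall} as your base step makes the argument circular unless you supply the rotation-system-preserving weak theorem yourself, and that is exactly the hard content. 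A smaller inaccuracy in the same step: the embedding $\mathcal{E}_0$ produced by Theorem~\ref{thm:radiall} shares the leveling and the rotation system, but not the actual positions of the vertices on the cylinder, so ``both drawings share endpoints'' is false; your per-edge quantity $\delta(e)$ is only defined after choosing reference arcs between the two placements of each vertex, i.e.\ it is well defined only up to coboundaries (which would suffice for cycles, but needs to be said).

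Second, the rerouting machinery that forms the body of your proposal is both unjustified and unnecessary. The two claims it rests on --- that every non-tree edge with $\delta(e)=1$ is incident to an essential (annular) face through which it can be slid without creating crossings or disturbing the rotation system, and that a hierarchical decomposition along essential cycles makes the successive reroutings non-interfering --- are asserted (``should be forced,'' ``the technical heart of the proof'') rather than proved, and nothing in the proposal rules out an intermediate stage at which no admissible move exists. More tellingly, the problem you are trying to fix cannot arise: by Lemma~\ref{lem:windingNumberInvariance}, in an even radial drawing the winding parity of a cycle is determined by the vertex order together with the relative order, in the rotations at its extremal vertices, of the two paths of the cycle joining them. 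An embedding is in particular an even drawing, so \emph{any} radial embedding with the same ordering and rotation system as the given even drawing automatically preserves the winding parity of every cycle; no corkscrew moves or essential-face reroutings are needed, and $\delta$ is identically trivial on cycles. Thus the only substantive content of Theorem~\ref{thm:radial} beyond Theorem~\ref{thm:radiall} for ordered graphs is the rotation-preserving embedding itself, which is precisely what is proved in~\cite{FPS15} and what your proposal leaves unproved.
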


Theorem~\ref{thm:radiall} follows from Theorem~\ref{thm:radial} using the construction from~\cite[Section 4.2]{FPSS12} that was used to reduce level-planarity to $x$-monotonicity.

\subsection{Working with Radial Embeddings and Even Drawings}
\label{sec:workingEmbeddings}

Given a connected graph $G$ with a rotation system, one can define a {\em facial walk} purely combinatorially by following the edges according to the rotation system (see, for example,~\cite[Section 3.2.6]{GT01}), by traversing consecutive edges at each vertex, in clockwise order.

We need some terminology for  embeddings of an ordered graph $G$ with $v_1 < v_2 < \ldots < v_n$.
The {\em maximum} ({\em minimum}) of a facial walk $W$ in the radial drawing of $G$ is the maximum (minimum) $v$ that lies on $W$.
A {\em local maximum} ({\em local minimum}) of a facial walk $W$ is a vertex $v$ on $W$ so that
is larger (smaller) than both the previous and subsequent vertices on $W$.
(A vertex $v$ might appear more than once on $W$; the previous definition implicitly refers to one
such appearance.)

Let $e = uv$ and $e'=vw$ be two consecutive edges on the facial walk $W$ of a face $f$ in an  embedding $\mathcal{E}(G)$ of $G$.
We call $(e,v,e')$ a {\em wedge} at $v$ in $f$, and we can identify it with a small neighborhood of $v$ in the interior of $f$.
Intuitively, we think of wedges as being the corners of a face.
Given an even  drawing $\mathcal{D}(G)$ of $G$ with the same rotation system as in $\mathcal{E}(G)$, we identify
the wedge  $(e,v,e')$  with a small neighborhood of $v$
on the left side of $W$.
A point in the complement of  $W$  in $\mathcal{D}(G)$  is in the {\em interior} ({\em exterior}) of $W$ if it
 receives the same (opposite) color as a wedge of $W$ when we two-color the complement of $W$. Note that in an even drawing,
all the wedges of $W$ have the same color, and hence, the definition is consistent.

For general (not necessarily facial) walks $W$,
we call $(e,v,e',-)$ and $(e,v,e',+)$ a {\em wedge} at $v$ in an oriented walk $W$, and identify it with a small neighborhood of $v$ on left
and right, respectively, side of $W$.
A wedge of an essential walk $W$ in a radial drawing is {\em above} $W$ if its color in the two-coloring
of the complement of $W$ is the same as the color of the region incident to $1\times \mathbb{S}^1$,
Otherwise, we say the edge is {\em below} $W$; in that case, it has the same color
as the region incident to $0\times \mathbb{S}^1$.

At each sink (source) $v$, the wedge that contains the region directly above (below) $v$ is called a  {\em concave wedge}.
A facial walk in an even radial drawing is an {\em upper (lower) facial walk} if it contains $1\times \mathbb{S}^1$  ($0\times \mathbb{S}^1$)  in its interior.
An {\em outer facial walk} is an upper or lower facial walk; other facial walks are {\em inner facial walks}.
If a radial embedding of $G$ has only one outer facial walk (and one outer face) then it also has an $x$-monotone embedding, using the technique described in Section~\ref{sec:terminology}.

Let $C$ denote a cycle in an even radial drawing.
\begin{wrapfigure}{R}{.3\textwidth}
\centering
\includegraphics[scale=0.6]{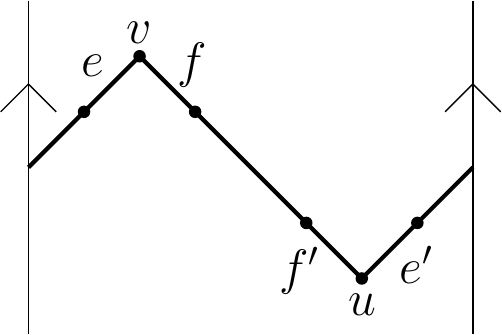}
\caption{Essential cycle (ends of path in inverse order at $u$ and $v$).}
\label{fig:aux1}
\end{wrapfigure}
Let $e,f$ and $e',f'$, respectively, be edges incident to the maximum $v$ and minimum $u$ of $C$.
Let $<_v$ be the lower rotation at $v$ and let $<_u$ be the upper rotation at $u$.
Suppose that $e<_v f$ and suppose that $e,e',f',f$ (we allow $e=e'$ or $f=f'$) appear in this order along $C$.
Then $C$ is essential if and only if $f' <_u e'$.

\begin{lemma}
\label{lem:windingNumberInvariance}
A cycle $C$ in an even radial drawing is essential if and only if the two paths connecting its extreme vertices do so in inverse order.
\end{lemma}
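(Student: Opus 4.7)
I plan to establish the biconditional by tracking the $\mathbb{S}^1$-coordinate along a traversal of $C$. Orient $C$ so that starting at $v$ we first traverse $e$ and follow the subpath $P_1$ down to $u$, arriving via $e'$, then leave $u$ via $f'$ and traverse $P_2$ back up to $v$, arriving via $f$. Lifting the $\mathbb{S}^1$-coordinate to a continuous real function $\tilde\theta$ along this traversal, the winding number of $C$ equals $(\tilde\theta(1)-\tilde\theta(0))/(2\pi)$, and $C$ is essential iff this integer is odd.

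The key local data are the angular positions of the edge tangents at $u$ and $v$, which are constrained by the rotation system. At $v$, the hypothesis $e<_v f$ in the lower rotation (the clockwise linearization of the lower edges, starting from the clockwise-$\mathbb{S}^1$ direction) forces the tangent of $e$ to appear before that of $f$ in the clockwise sweep through the lower half of the tangent disk at $v$. An analogous statement holds at $u$ for $e'$ and $f'$ in the upper rotation. Thus the ``closing adjustment'' needed to stitch $P_1$ and $P_2$ together into the cycle $C$, measured modulo $2\pi$, is determined entirely by these two rotations.

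Next, I would argue that the parity of the total winding of $C$ depends only on the two local orderings, not on the internal behavior of $P_1$ and $P_2$. The idea is that a homotopy of $P_1$ (or $P_2$) that fixes its endpoints and the incoming/outgoing tangent directions at $u$ and $v$ alters its winding only by an integer multiple of $2\pi$, so parity is preserved. It therefore suffices to verify the biconditional in a canonical model in which each of $P_1, P_2$ is drawn as simply as possible consistent with the prescribed boundary data. The main obstacle is justifying this reduction rigorously: although an individual radial edge may wind arbitrarily, the constraint that both paths live in the slab $[I(u),I(v)]\times\mathbb{S}^1$ (since $u$ and $v$ are the extremes of $C$), together with the fact that each edge is monotone in $I$, should force the modular winding contribution of each path to be determined by its boundary angular data alone.

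In the canonical model the claim follows by inspection. When the orders at $u$ and $v$ agree (either $e<_v f$ together with $e'<_u f'$, or both reversed), the two paths can be drawn on the same side of the cylinder and form a non-essential loop. When the orders are inverted (for instance $e<_v f$ together with $f'<_u e'$), the two paths must wrap around the cylinder to meet up correctly, forcing $C$ to be essential. Combined with the standing hypothesis $e<_v f$ and the prescribed cyclic order $e,e',f',f$ along $C$, this yields exactly the claimed equivalence $f'<_u e'$.
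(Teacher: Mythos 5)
There is a genuine gap at your reduction step, and it is not reparable as stated. You claim that a homotopy of $P_1$ fixing endpoints and end directions ``alters its winding only by an integer multiple of $2\pi$, so parity is preserved''; but adding $2\pi k$ to the winding of $P_1$ changes the integer winding number of the closed cycle $C$ by $k$, so parity is preserved only when $k$ is even---the conclusion does not follow from the premise. You then try to rescue the reduction by asserting that $I$-monotonicity of edges and confinement to the slab $[I(u),I(v)]\times\mathbb{S}^1$ force the winding contribution of each path, modulo its boundary angular data, to be determined; this is false: a single radial edge may spiral around the cylinder any number of times while its projection to $I$ stays injective and its end directions are unchanged, so a path from $v$ to $u$ with prescribed boundary data realizes every winding class. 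Consequently the verification ``by inspection'' in your canonical model covers only one representative per boundary datum and does not extend to arbitrary even radial drawings.

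The deeper symptom is that your argument never uses the hypothesis that the drawing is even, yet the lemma fails without it: starting from a non-essential cycle whose two paths attach in the same order at $u$ and $v$, one can redraw locally near $v$ to transpose $e$ and $f$ in the lower rotation at the cost of a single (odd) crossing between these two adjacent edges; the winding number is unchanged but the order at $v$ is now inverted. So evenness must enter in an essential way. The paper's proof does this by two-coloring the complement of $C$ and following the color immediately to the right of a traversal of one of the two paths: evenness guarantees this color is consistent along the walk, the colors at the start and end are tied via the rotations ($e<_v f$, and $e'<_u f'$ versus $f'<_u e'$) to the concave wedges at $v$ and $u$, and the concave wedges at the maximum and minimum receive equal colors precisely when the winding parity of $C$ is even. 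If you wish to keep your lifting viewpoint, you would have to show that in an even drawing the lifts of $P_1$ and $P_2$ terminate at translates of $u$ whose relative offset parity is dictated by the rotations at $u$ and $v$---that is, you must convert the crossing-parity hypothesis into a constraint on the relative winding of the two paths, which is exactly the content your homotopy step omits.
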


\section{Strong Hanani-Tutte for Radial Drawings}
\label{sec:strongRadialkHT}
Theorem~\ref{thm:radial} preserves the parity of the winding number of cycles
in even radial drawings of ordered graphs, but
there are examples showing that we cannot hope to do this when the drawings
are only independently even.
We will make do with a somewhat weaker property:

Given an ordered graph $G$ with a radial drawings $D_1$ and $D_2$, we say that
{\em $D_2$ is supported by $D_1$} if
for every essential cycle $C_2$ in $D_2$, there is an essential cycle $C_1$ in $D_1$ such that
$[\min C_1, \max C_1] \subseteq [\min C_2, \max C_2]$.

A radial drawing of an ordered connected graph is {\em weakly essential} if every essential cycle in the drawing passes through $v_1$ or $v_n$ (the first or the last vertex). With this definition we can state the strengthened version of our main result which we need for the proof.

\begin{theorem}\label{thm:radialStrong}
Let $G$ be an ordered graph. Suppose that $G$ has an independently even radial drawing. Then $G$ has a radial embedding.
Moreover, (i) if  the given drawing of $G$ is weakly essential then $G$ has an $x$-monotone embedding;
 and (ii) the new embedding is supported by the original drawing. 
\end{theorem}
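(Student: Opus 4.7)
The plan is to prove Theorem~\ref{thm:radialStrong} by induction on $|E(G)|$, following the strategy from the strong Hanani-Tutte theorem for level planarity~\cite{FPSS12} and handling the additional topological complications introduced by essential cycles on the cylinder.

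For the base case, if the drawing is already even, apply Theorem~\ref{thm:radial} to obtain a radial embedding with the same ordering, rotation system, and winding number parities for every cycle. Then condition (ii) is immediate, since any essential cycle in the embedding remains essential in the original drawing with the same min--max interval. For (i), being weakly essential implies that every essential cycle in the embedding passes through $v_1$ or $v_n$; cutting the cylinder along a vertical arc through (or just past) $v_1$ and $v_n$ then yields an $x$-monotone embedding, the radial-to-$x$-monotone transformation being facilitated by Lemma~\ref{lem:windingNumberInvariance}.

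For the inductive step, assume $\mathcal{D}(G)$ is independently even but not even, so some two adjacent edges $e, f$ sharing a vertex $v$ cross oddly. I would use edge-vertex switches to regularize $\mathcal{D}(G)$: such switches preserve the independently-even property, and they alter the winding-number parity only of cycles that use the moved edge, so essential-cycle structure is trackable. The next step is to reduce to the weakly essential case by picking a ``deepest'' essential cycle $C$ (with $I(C)$ as short as possible) that does not pass through $v_1$ or $v_n$, and using $C$ to decompose $G$ into the subgraph inside the cylindrical strip $I(C) \times \mathbb{S}^1$ and the complementary subgraph outside. By the choice of $C$ as deepest, the inside subdrawing is weakly essential with respect to its own first and last levels; the outside subdrawing, weakly essential with respect to $v_1$ and $v_n$ after gluing $C$ onto its boundary, can also be handled by induction.

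The main technical obstacles are: (a) defining the two subgraphs and their inherited drawings so that they remain independently even and ordered; (b) recombining the two radial embeddings obtained by induction into a single radial embedding of $G$ along the cycle $C$, preserving the rotation at the vertices of $C$; and (c) verifying the supported-by condition (ii) across the recombination, using the inductive supported-by statement on the inside and outside to trace every essential cycle in the final embedding back to an essential cycle of $\mathcal{D}(G)$ with an appropriate min--max interval; the cycle $C$ itself lies in $\mathcal{D}(G)$ and provides the witness when no deeper essential cycle is needed. Once the weakly essential base case is reached, cutting the cylinder reduces the problem to the strong Hanani-Tutte theorem for level planarity~\cite{FPSS12}, completing the induction. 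The hardest part, as in the level-planar analogue, is handling the inductive step cleanly enough that the inherited drawings on both sides of $C$ verify the hypotheses of the theorem, while simultaneously tracking essential cycles to maintain condition~(ii).
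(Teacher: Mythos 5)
There is a genuine gap: your inductive step is a plan whose unresolved ``technical obstacles'' are precisely the content of the theorem, and the decomposition you propose does not work as stated. In an independently even drawing an essential cycle $C$ is not crossing-free---edges adjacent to $C$ may cross it oddly and independent edges may weave across it evenly---so $C$ does not separate the drawing, and ``the subgraph inside the strip $I(C)\times\mathbb{S}^1$'' versus ``the complementary subgraph outside'' is not a decomposition along $C$ at all: vertices and edges of both parts can interleave arbitrarily in the drawing, the inherited subdrawings need not be independently even radial drawings of ordered subgraphs in any useful sense, and the induction measure $|E(G)|$ need not decrease (the ``inside'' can be almost all of $G$). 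Moreover, choosing $C$ ``deepest'' (with $I(C)$ minimal) does not make the inside weakly essential: an essential cycle lying in the strip need not pass through the extreme vertices of the inside subgraph, so the hypothesis of part~(i) is not available there. Finally, your opening move---``use edge-vertex switches to regularize''---cannot turn an independently even drawing into an even one in general; eliminating the surviving odd adjacent pair $e,f$ at a vertex $v$ (with a good edge $g$ between them) is the heart of the problem, and your sketch never engages with it. The recombination step (gluing two radial embeddings along $C$ with consistent rotations and verifying condition~(ii)) is likewise asserted, not proved.

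For comparison, the paper takes a different route that avoids cutting along an essential cycle altogether: it rules out a minimal counterexample. After flipping consecutive oddly-crossing edges in rotations, it fixes a closest odd pair $e,f$ with an intermediate edge $g$ at a vertex $v$, and analyzes the components of $G\setminus v$ using Lemmas~\ref{lem:1}, \ref{lem:3} and~\ref{lem:2} (which force such components either to be single vertices or essential, and control where essential components can attach), together with the path/parity argument of Lemma~\ref{lem:aux1} and Lemma~\ref{lem:windingNumberInvariance}, to derive a contradiction (typically a multi-edge $vw$ or a violation of Lemma~\ref{lem:aux1}) when $v\notin\{v_1,v_n\}$. The case $v\in\{v_1,v_n\}$ is handled separately by the pendant-edge construction of Lemma~\ref{thm:pendant}, which produces an \emph{even} drawing so that Theorem~\ref{thm:radial} applies, followed by a careful reattachment of $v_1,v_n$ (``exposing'' a vertex on an outer face) to preserve condition~(ii); part~(i) is proved independently in Lemma~\ref{lem:weaklyEssential} via a cut-space argument and the strong Hanani--Tutte theorem for $x$-monotone drawings, not by simply ``cutting along a vertical arc.'' To repair your approach you would need, at minimum, a correct surgery along an essential object that is actually crossing-free in the drawing, a proof that both pieces inherit the hypotheses, and a gluing lemma preserving radiality and condition~(ii)---none of which is supplied.
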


Theorem~\ref{thm:radiallStrong} follows from Theorem~\ref{thm:radialStrong} by the construction from~\cite[Section 4.2]{FPSS12}.

\subsection{Working with Independently Even Radial Drawings}
\label{sec:workingStrong}

Call an edge drawn on $\mathcal{C}$ {\em bounded} if its points lie between its endpoints
(with respect to the $I$-coordinate);
that is, $u < p < v$ for every point $p$ in the interior of edge $uv$. Call a drawing of $G$ {\em bounded} if all edges are bounded.

\begin{lemma}\label{lem:boundedStrong}
 If a graph has an (independently) even bounded drawing, then it has an (independently) even radial drawing with the same rotation system.
\end{lemma}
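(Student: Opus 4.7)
The plan is to continuously deform the bounded drawing $\mathcal{D}$ of $G$ into a radial drawing $\mathcal{D}'$ via a homotopy that is the identity in a neighborhood of each vertex and that, generically, passes no edge through any vertex. Because the homotopy is stationary near every vertex, the tangent direction of every edge-end at every vertex is preserved, so the upper and lower rotations at each vertex---and hence the entire rotation system---are unchanged. Because no edge-vertex switch occurs during the homotopy, the parity of crossings between every pair of independent edges (and, under the stronger evenness hypothesis, between every pair of edges) is preserved, by the edge-vertex switch principle recalled in Section~\ref{sec:terminology}.

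Concretely, I parametrize each edge $e=uv$ (with $I(u)<I(v)$) as $\gamma_e:[0,1]\to \mathcal{C}$. Boundedness gives $I(\gamma_e(s))\in (I(u),I(v))$ for $s\in(0,1)$, and after an arbitrarily small preliminary perturbation of $\mathcal{D}$ I may assume that $I\circ\gamma_e$ is strictly increasing on short intervals $[0,\varepsilon]$ and $[1-\varepsilon,1]$ near the endpoints. On the middle interval $[\varepsilon,1-\varepsilon]$, I linearly interpolate (in the homotopy parameter $t\in[0,1]$) from $I(\gamma_e(s))$ to the affine function $L_e(s)$ running from $I(\gamma_e(\varepsilon))$ to $I(\gamma_e(1-\varepsilon))$, while leaving $\mathbb{S}^1(\gamma_e(s))$ unchanged; outside $[\varepsilon,1-\varepsilon]$ nothing moves. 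At $t=1$ the $I$-projection of every edge is strictly increasing on $[0,1]$, so $\mathcal{D}'$ is radial.

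It remains to ensure that during this homotopy no edge passes through any vertex $w$. A collision of edge $e$ with a vertex $w$ satisfying $I(u)<I(w)<I(v)$ requires both $\mathbb{S}^1(\gamma_e(s))=\mathbb{S}^1(w)$ for some $s\in[\varepsilon,1-\varepsilon]$---giving only finitely many candidate parameters $s$ after a generic initial perturbation of $\gamma_e$---and, since the $I$-coordinate varies linearly in $t$, at most one time $t$ per candidate $s$. The set of affine targets $L_e$ that yield any such collision is therefore a finite union of codimension-one subsets of the (infinite-dimensional) space of admissible monotone targets; a generic arbitrarily small perturbation of each $L_e$ (for instance, adding small smooth bumps) avoids every edge-vertex collision simultaneously. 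The only delicate step is this transversality / general-position argument, which reduces to the finiteness of edges and vertices; everything else follows directly from the two principles highlighted in the first paragraph.
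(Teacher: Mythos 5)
There is a genuine gap, and it is the step you yourself flag as ``the only delicate step'': the claim that edge--vertex collisions during the height-straightening homotopy can be killed by a generic small perturbation of the targets $L_e$. These collisions are not a transversality accident of codimension one; they can be forced topologically once you insist that the final drawing of $e$ is radial \emph{and} has the same $\mathbb{S}^1$-projection as the original. Concretely, suppose $I(u)=0.2$, $I(v)=0.8$, and $e$ crosses the generator line $\{\mathbb{S}^1=\mathbb{S}^1(w)\}$ of a vertex $w$ with $I(w)=0.5$ at two parameters $s_1<s_2$ with heights $0.6$ and $0.4$ (a perfectly legal bounded, independently even configuration). Any admissible target must assign increasing heights to $s_1$ and $s_2$, so it cannot keep the first crossing above $I(w)$ and the second below $I(w)$; hence at least one of these points must sweep past $w$ during your homotopy, no matter how you bump $L_e$. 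Worse, the parity of the number of times $e$ passes over $w$ is an invariant of the pair (initial curve, final curve) rel endpoints---it equals the mod~$2$ winding of the closed curve formed by the old and new copies of $e$ around $w$---so when that parity is odd the crossing parity between $e$ and every edge incident to $w$ necessarily changes, and your proof has no mechanism to restore it. Choosing cleverer monotone targets can sometimes fix the parity at one vertex, but you would have to control all vertices simultaneously with a single monotone re-heighting whose angular data is frozen, and you give no argument for that.

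The paper's proof is structured precisely to absorb this obstruction rather than avoid it: it redraws one edge $e$ at a time, lets $e$ pass through vertices while its $I$-projection is made injective, records the set $S$ of vertices passed an odd number of times, and then performs a compensating $(e,w)$-switch for each $w\in S$, realized by re-routing $e$ inside a thin annulus $[w-\epsilon,w+\epsilon]\times\mathbb{S}^1$ while keeping $e$ radial. That compensation step changes the $\mathbb{S}^1$-projection of $e$, which is exactly the freedom your ``angles fixed, straighten the heights'' homotopy forbids. Your observations that a homotopy stationary near the vertices preserves the rotation system, and that crossing parities only change at edge--vertex switches, are correct and are also the engine of the paper's argument; but to complete your proof you would need either to show that a monotone re-heighting can always be chosen so that every vertex is swept an \emph{even} number of times (which you do not attempt, and which is doubtful with the angular data fixed), or to add a parity-repair step of the kind the paper uses.
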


We often make use of the following fact.

\begin{lemma}
\label{lem:oddCycleStrong}
Let $P$ be a path and let $C$ be an essential cycle, vertex disjoint from $P$, in an independently even radial drawing of a graph.
Then  $I(P)$ does not contain $I(C)$.
\end{lemma}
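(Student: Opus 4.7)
The plan is to derive a contradiction from the assumption $I(P) \supseteq I(C)$ by counting the parity of crossings between $P$ and $C$ in two different ways. On the one hand, since $P$ and $C$ are vertex-disjoint, every edge of $P$ is independent of every edge of $C$; by the independent-even hypothesis each such pair crosses evenly, so the total number of crossings between $P$ and $C$ is even.

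On the other hand, let $\alpha=\min C$ and $\beta=\max C$, and set $R=[\alpha,\beta]\times\mathbb{S}^1$. I first replace $P$ by its subpath between its lowest and highest vertices; this subpath still satisfies $I(P)\supseteq I(C)$ and is still vertex-disjoint from $C$. Since distinct vertices of an ordered graph lie on distinct levels, one endpoint of the new $P$ has $I$-coordinate strictly less than $\alpha$ and the other strictly greater than $\beta$, while $C\subseteq R$. All crossings between $P$ and $C$ therefore occur inside $R$. Decompose $P\cap R$ into its maximal sub-arcs; each such sub-arc has both endpoints on $\partial R$. Tracking $P$ from below $R$ to above $R$, a simple parity count on the transitions shows that the number of \emph{connecting} sub-arcs---those with one endpoint on $\{\alpha\}\times\mathbb{S}^1$ and the other on $\{\beta\}\times\mathbb{S}^1$---is odd.

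The key topological step is that every connecting sub-arc crosses $C$ an odd number of times, while every sub-arc with both endpoints on the same component of $\partial R$ crosses $C$ an even number of times. One way to see this is via the $\mathbb{Z}/2$-intersection pairing on the annulus $R$: an essential cycle generates $H_1(R;\mathbb{Z}/2)$, a connecting arc generates $H_1(R,\partial R;\mathbb{Z}/2)$, and the pairing of generators is $1$. Summing the contributions of the sub-arcs then shows $P$ crosses $C$ an odd number of times, contradicting the even count from the first paragraph.

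I expect the topological step to be the main obstacle, since $C$ may be highly self-intersecting inside $R$, so that counting crossings directly is delicate. Besides the homological argument, one can justify it by lifting the picture to the universal cover $[\alpha,\beta]\times\mathbb{R}$: an essential $C$ lifts to a curve whose two ends are translated by its (odd) winding number, and any lift of a connecting sub-arc joins the two boundary lines of the strip, so by a continuity/mean-value argument it must cross the lifted $C$ an odd number of times; arcs whose lifts have both endpoints on the same boundary line cross it evenly.
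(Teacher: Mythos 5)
Your proof is correct and takes essentially the same route as the paper's: since $P$ and $C$ are vertex-disjoint, independent evenness forces an even number of crossings between $P$ and $C$, while a sub-path running from a vertex below $\min C$ to a vertex above $\max C$ must cross the essential cycle $C$ an odd number of times, so some edge pair crosses oddly, a contradiction. The paper states this odd-crossing fact without elaboration; your annulus decomposition and $\mathbb{Z}/2$-intersection pairing simply make that topological step explicit.
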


If two consecutive edges in the rotation at a vertex $v$ cross oddly, we can make them cross
evenly by a local redrawing: we ``flip'' the order of the two edges in the rotation at $v$,
adding a crossing, which makes them cross evenly.
For $x$-monotonicity it is known~\cite{FPSS12} that if the edges incident to a vertex cannot be
made to cross evenly using edge flips, then there must be a connected component
$H$ of $G\setminus \{v,w\}$
satisfying $v\leq \min H < \max H \leq w$ or a multi-edge $vw$, but both cases can be dealt with. An
application of the weak Hanani-Tutte theorem for $x$-monotonicity completes the proof.
We want to use the same approach for radial drawings, and we already know that the weak Hanani-Tutte
theorem holds for radial drawings. However, for radial drawings there may be a vertex $v$
whose incident edges cannot be made to cross evenly using flips, but no obstacle like $H$ exists.
However, a closer look reveals that this can only happen when the vertex $v$ is either the first
or the last vertex of the ordered graph. The next lemma helps us deal with this case.

Given an ordered graph $G$ with vertices $v_1 < \ldots <v_n$ without
the edge $v_1v_n$,
let $G'$ denote the ordered graph obtained by removing $v_1,v_n$, and
replacing each edge $vv_i$ with $i\in\{1,n\}$
by a ``pendant edge'' $vu$ where $u$ is a new vertex of degree~one,
with $u$ is placed in the order before $v_2$ if $i=1$
and after $v_{n-1}$ if $i=n$ (and otherwise ordered arbitrarily).

\begin{lemma}
\label{thm:pendant}
If $G$ is a connected ordered graph with an (independently) even radial drawing $\mathcal{D}(G)$, then $G'$ has an (independently) even radial drawing $\mathcal{D}'(G')$ such that $\mathcal{D}'(G\setminus \{v_1,v_n\})= \mathcal{D}(G\setminus \{v_1,v_n\})$.
\end{lemma}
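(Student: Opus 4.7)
The plan is a direct truncation construction. Fix $\delta>0$ small enough that the strip $I^{-1}([I(v_1),I(v_1)+\delta])$ contains only $v_1$ and the initial portions of the edges at $v_1$, and pick an analogous $\delta'$ near $v_n$. For each edge $e_i=v_1v^{(i)}$, let $p_i$ be the point where $e_i$ crosses the level curve at height $I(v_1)+\delta$; declare $p_i$ to be a new pendant vertex $u_i$, and take the new edge $u_iv^{(i)}$ to be the portion of $e_i$ lying above $p_i$. Do the analogous construction at $v_n$, and then delete $v_1$ and $v_n$ from the drawing. By construction, $\mathcal{D}'(G')$ restricts to $\mathcal{D}(G\setminus\{v_1,v_n\})$ on the preserved subgraph.

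Verification of (independent) evenness breaks into three cases. Pairs of ``old'' edges (neither incident to $v_1$ nor $v_n$ in $G$) are drawn identically, so their crossings are inherited. For an old edge $e$ and a new pendant edge $u_iv^{(i)}$ coming from $e_i$, the edge $e$ has no points in the truncation strip, so $\cro(e,u_iv^{(i)})\equiv\cro(e,e_i)\pmod 2$; the pair is independent in $G'$ exactly when $\{e,e_i\}$ was independent in $G$, so the parity is even by hypothesis on $\mathcal{D}$.

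The subtle case is a pair of pendant edges $u_iv^{(i)}$ and $u_jv^{(j)}$, which are always independent (any multi-edge at $v_1$ can be handled separately). After a generic small perturbation of $\mathcal{D}$ so that the short bottoms of $e_i$ and $e_j$ inside the truncation strip do not cross each other, the crossing parity of the two pendant edges equals $\cro(e_i,e_j) \pmod 2$. In the \emph{even} variant of the hypothesis this is automatically zero, and the argument is complete.

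The main obstacle is the \emph{independently even} variant, where $\cro(e_i,e_j)$ can be odd since the pair shares $v_1$. My plan is to pre-modify $\mathcal{D}$ inside a small disk around $v_1$ (and $v_n$) before truncating, forcing $\cro(e_i,e_j)\equiv 0 \pmod 2$ for every pair at $v_1$ --- a legal modification, since $e_i$ and $e_j$ share $v_1$ and hence changes to this parity cannot break independent evenness of the intermediate drawing. The technical difficulty is that local adjustments at $v_1$ (edge flips together with winding moves using the extra cylinder topology below $v_1$) give only about $k-1$ parameters to control $\binom{k}{2}$ pairwise parities for $k$ edges at $v_1$, producing an achievable subspace of ``coboundary-shaped'' parity patterns $\pi_{ij}=x_i+x_j \pmod 2$. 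The heart of the proof must therefore show that the pattern $\cro(e_i,e_j) \bmod 2$ always lies in this subspace, an identity I expect to derive by combining the independent evenness of $\mathcal{D}$ with the connectedness of $G$, using paths in $G\setminus\{v_1\}$ to tie the pairwise crossings at $v_1$ together via a cocycle-type relation.
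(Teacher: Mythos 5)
Your construction (truncating the edges at $v_1$ and $v_n$ to create the pendant edges) is the same as the paper's, and your handling of old--old and old--new pairs is fine. The gap is exactly where you locate it: the pendant--pendant pairs. Your plan is to pre-clean the drawing near $v_1$ (and $v_n$) so that all pairs of edges sharing $v_1$ cross evenly \emph{before} truncating, and the ``heart of the proof'' --- that the pattern $\cro(e_i,e_j)\bmod 2$ at $v_1$ lies in the locally achievable subspace --- is only announced, not proved. Worse, the claim in the form you state it (coboundary patterns $x_i+x_j$) is false: in an independently even drawing the pairwise parities at $v_1$ are completely unconstrained by the hypothesis, and e.g.\ a radial drawing of a star at $v_1$ with a pendant path on top (so that $v_1v_n\notin E$) in which exactly one pair of edges at $v_1$ crosses oddly is independently even, yet a pattern with an odd number of odd pairs is never of the form $x_i+x_j$. (Your own toolkit is also undercounted --- a flip of two rotation-consecutive edges toggles a single pair, which is not a coboundary --- but even the correct achievable set, generated by flips and twists, is a proper subset of all patterns once $v_1$ has four or more incident edges.) So the identity you hope to derive from connectivity and independent evenness cannot hold as stated, and the weaker statement you would actually need --- that the pattern at $v_1$ can always be cancelled by redrawing near $v_1$ --- is precisely the difficulty the paper flags: it notes that a vertex whose incident edges cannot be made to cross evenly by local flips can occur, but only at $v_1$ or $v_n$, which is the very reason this pendant lemma exists. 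Pre-cleaning at $v_1$ is therefore not a viable route.

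The paper's proof repairs the parities \emph{after} truncation, and this is what makes the lemma easy: the new pendant endpoints $v_1'<\cdots<v_k'$ (and the ones above $v_{n-1}$) are placed at distinct levels below $v_2$ and each has degree one. Hence an edge--vertex switch of a pendant edge $v_i'v$ with the vertex $v_k'$ changes the crossing parity of $v_i'v$ only with the unique edge incident to $v_k'$, namely $v_k'v$; the move is performed inside a thin annulus around the level of $v_k'$, which contains no other vertices and no edges of $G\setminus\{v_1,v_n\}$, so the drawing of $G\setminus\{v_1,v_n\}$ is untouched and all other parities are preserved. This gives independent toggles for every pendant--pendant pair, so any bad pattern can be cancelled outright --- no structural claim about the original pattern at $v_1$ is needed. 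In short: your truncation step is right, but you should discard the pre-cleaning plan and instead exploit the extra freedom that the truncation itself creates, namely that the former common endpoint has been split into degree-one vertices at distinct levels.
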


Using Lemma~\ref{thm:pendant} we can establish part~(i) of Theorem~\ref{thm:radialStrong}.

\begin{lemma}
\label{lem:weaklyEssential}
Suppose that $G$ has an independently even radial drawing that is  weakly essential.
Then $G$ has an  $x$-monotone embedding.
\end{lemma}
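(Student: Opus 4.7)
\smallskip
\noindent\textbf{Proof plan.} My plan is to apply Lemma~\ref{thm:pendant} to strip off the two extreme vertices $v_1$ and $v_n$ (through which every essential cycle must pass, by the weakly essential hypothesis), reduce the resulting drawing of $G'$ to an $x$-monotone one, invoke the strong Hanani--Tutte theorem for $x$-monotonicity from~\cite{FPSS12}, and then reinstate $v_1$ and $v_n$. Without loss of generality I would also assume $G$ connected and that $v_1v_n$ is not an edge, the latter so that Lemma~\ref{thm:pendant} applies (a multi-edge $v_1v_n$ is an easy separate case).

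First I would apply Lemma~\ref{thm:pendant} to produce $G'$ together with an independently even radial drawing $\mathcal{D}'(G')$ that agrees with $\mathcal{D}(G)$ on $G\setminus\{v_1,v_n\}$. Every pendant edge of $G'$ is a bridge, so every cycle of $G'$ is contained in $G\setminus\{v_1,v_n\}$; by the weakly essential hypothesis no such cycle is essential in $\mathcal{D}(G)$, and since $\mathcal{D}'$ coincides with $\mathcal{D}$ there, no cycle of $G'$ is essential in $\mathcal{D}'(G')$. Each edge of $\mathcal{D}'$ is radial and hence corresponds to a continuous function from a subinterval of $I$ into $\mathbb{S}^1$; lifting $\mathbb{S}^1$ to its universal cover $\mathbb{R}$, each such function lifts continuously to $\mathbb{R}$ once its starting value is chosen, and the absence of essential cycles in $\mathcal{D}'(G')$ should let me make these choices globally consistently, producing an $x$-monotone drawing in the strip $I\times\mathbb{R}$ whose crossings between pairs of independent edges are exactly those of $\mathcal{D}'$. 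The lifted drawing then remains independently even, and the strong Hanani--Tutte theorem for $x$-monotonicity from~\cite{FPSS12} yields an $x$-monotone embedding $\mathcal{E}'(G')$ of $G'$.

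Finally I would rebuild an $x$-monotone embedding of $G$ from $\mathcal{E}'(G')$. By the definition of $G'$, the pendants replacing edges at $v_1$ sit strictly to the left of $v_2$ and those replacing edges at $v_n$ sit strictly to the right of $v_{n-1}$; since they have degree one they each lie on the boundary of the unbounded left (respectively right) face of the strip. Placing $v_1$ to the far left and $v_n$ to the far right, and extending each pendant edge through its degree-$1$ endpoint out to the newly placed vertex, produces an $x$-monotone drawing of $G$; no new crossings are created because nothing in $\mathcal{E}'(G')$ lies further left than the leftmost pendant or further right than the rightmost one, and the extensions are $x$-monotone by construction.

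The main obstacle is the middle step: one must verify carefully that for an independently even radial drawing (which is not an embedding, and is not required to be even) the absence of essential cycles really does yield a single coherent lift to an $x$-monotone strip drawing with the same independent-parity pattern, so that the $x$-monotone strong Hanani--Tutte theorem is applicable. Here I would rely on the fact that the winding-number obstruction is homological and vanishes precisely when every cycle has even winding number---which, combined with the radial (projection-injective) shape of each edge, forces the lift of the whole graph to close up. The reattachment of $v_1,v_n$ is by comparison routine, using only the extremality of the pendants in the $I$-order.
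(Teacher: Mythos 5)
Your overall architecture (pendant reduction via Lemma~\ref{thm:pendant}, then the strong Hanani--Tutte theorem for $x$-monotone drawings, then reattaching $v_1,v_n$) matches the paper, but the middle step --- passing from the cylinder to the strip by lifting to the universal cover --- has a genuine gap. The obstruction to lifting a drawing of $G'$ from $\mathcal{C}=I\times\mathbb{S}^1$ to $I\times\mathbb{R}$ is that every cycle must have winding number \emph{zero}, not merely even winding number. The weakly essential hypothesis (via the definition of ``essential'') only gives you even winding parity for every cycle of $G'$, and an independently even radial drawing can easily contain a cycle of winding number $2$ (radial edges may wind around $\mathbb{S}^1$ arbitrarily often, since only the projection to $I$ is required to be injective). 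For such a drawing no coherent lift exists, so your claim that ``the winding-number obstruction is homological and vanishes precisely when every cycle has even winding number'' is false as stated. Moreover, even in the cases where a lift does exist, choosing one lift per edge does not preserve crossing parities between independent edges: the crossings of $e$ and $f$ on the cylinder distribute among the crossings of $\tilde e$ with the various deck-transformation translates of $\tilde f$, so an evenly crossing pair downstairs can become an oddly crossing pair of chosen lifts. Both points undercut the applicability of the $x$-monotone strong Hanani--Tutte theorem to your lifted drawing.

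The paper circumvents exactly this difficulty by a redrawing step rather than a lifting step: it adds an auxiliary radial edge $e=vv'$ with $v<\min G'\le\max G'<v'$, observes that since $G'$ has no essential cycle the set $E'$ of edges crossed oddly by $e$ is orthogonal to the cycle space and hence is an edge cut, performs edge--vertex switches of $e$ with the vertices on one side of that cut to make $e$ cross every edge of $G'$ evenly, and then invokes \cite[Lemma~4]{FPS15} to redraw so that $e$ is entirely crossing-free. Cutting the cylinder along the now crossing-free curve $e$ is a homeomorphism of $\mathcal{C}\setminus e$ onto a planar region that carries the drawing (and all its crossings) over exactly, turning it into an independently even $x$-monotone drawing; note that this redrawing is what eliminates cycles of nonzero even winding number, something your unmodified lift cannot do. Your reattachment of $v_1,v_n$ at the extremes is fine and matches the paper, but to repair the proof you need to replace the universal-cover argument with an argument of this kind (an explicit crossing-free essential arc to cut along, or some other mechanism that first normalizes winding numbers while controlling independent crossing parities).
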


\subsection{Components of a minimal counterexample}
\label{sec:CompCount}

We establish various properties of a minimal counter-example $G$---first with respect to vertices,  then edges---to Theorem~\ref{thm:radialStrong} given
by an independently even radial drawing $\mathcal{D}(G)$.

\begin{lemma}
\label{lem:connectedStrong}
$G$ is connected.
\end{lemma}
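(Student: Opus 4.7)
The plan is to argue by contradiction. Suppose $G$ splits into components $G_1,\dots,G_k$ with $k\ge 2$. Each $G_i$ is a proper induced subgraph of $G$, so it has strictly fewer vertices, and its induced drawing $\mathcal{D}(G_i)\subseteq\mathcal{D}(G)$ is independently even. By the vertex-minimality of $G$ as a counterexample, Theorem~\ref{thm:radialStrong} applies to each $G_i$: we obtain a radial embedding $\mathcal{E}_i$ of $G_i$ supported by $\mathcal{D}(G_i)$, which is moreover $x$-monotone whenever $\mathcal{D}(G_i)$ is weakly essential.

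First I would dispose of part~(i) of Theorem~\ref{thm:radialStrong} at the global level: if $\mathcal{D}(G)$ is weakly essential, Lemma~\ref{lem:weaklyEssential} already yields an $x$-monotone (hence radial) embedding of $G$, contradicting the choice of $G$. So we may assume $\mathcal{D}(G)$ is not weakly essential for the remainder.

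Next I would assemble the $\mathcal{E}_i$'s into a single radial embedding of $G$. The components are pairwise vertex- and edge-disjoint, so the only thing that can go wrong in the assembly is an edge of some $\mathcal{E}_i$ crossing an edge of some $\mathcal{E}_j$. I exploit the fact that precomposing each $\mathcal{E}_i$ with a self-homeomorphism of $\mathcal{C}$ that preserves the $I$-coordinate yields another radial embedding of $G_i$ with the same rotation system and winding-number parities. Using this freedom, I place each $\mathcal{E}_i$ inside a ``slot'' of $\mathcal{C}$ whose slots are pairwise disjoint and whose vertex $I$-coordinates agree with the global order: components whose embedding contains no essential cycle can be squeezed into thin rectangular strips on the cylinder; components whose embedding contains essential cycles are placed into thin helical annular strips that wrap around $\mathbb{S}^1$.

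Once such an assembly $\mathcal{E}$ exists, support is automatic: every cycle of $G$ lies in a single component, so any essential cycle $C$ of $\mathcal{E}$ sits inside some $\mathcal{E}_i$, and by the support of $\mathcal{E}_i$ by $\mathcal{D}(G_i)\subseteq\mathcal{D}(G)$ there is a matching essential cycle in $\mathcal{D}(G)$ with $I$-range contained in $[\min C,\max C]$. Thus $\mathcal{E}$ certifies $G$ as satisfying Theorem~\ref{thm:radialStrong}, contradicting that $G$ is a counterexample.

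The main obstacle is the geometric assembly when two or more components require ``wrapping'' slots. I expect to resolve this via Lemma~\ref{lem:oddCycleStrong}: taking $C$ to be an essential cycle in $\mathcal{D}(G_i)$ produced by the support of $\mathcal{E}_i$, and $P$ a spanning path of another component $G_j$ (which is automatically vertex-disjoint from $C$ as $i\neq j$), the lemma forces $[\min G_j,\max G_j]\not\supseteq[\min C,\max C]$. Applying this across all ordered pairs of essential components shows that the $I$-ranges of cross-component essential cycles form a laminar family, from which the required mutually disjoint helical slots can be carved out explicitly, completing the assembly.
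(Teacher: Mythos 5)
Your high-level skeleton matches the paper's: assume $G$ disconnected, apply minimality to each component to get supported radial embeddings, and use part~(ii) of Theorem~\ref{thm:radialStrong} together with Lemma~\ref{lem:oddCycleStrong} to rule out nesting of $I$-ranges before recombining. But the recombination step, which is the bulk of the paper's proof, is asserted rather than proved, and the specific claims you make about it do not hold up. First, the vertex levels are fixed by the global ordering, so a component cannot be relocated into a ``slot'' of your choosing: the only freedom is in the $\mathbb{S}^1$-direction and in which face of the already-placed components you route it through. In particular an essential component cannot be confined to a ``thin helical annular strip'': its embedding separates the cylinder throughout its $I$-range, and when another component $G_0$ has its entire $I$-range strictly between the lower and upper facial walks of an essential component $G_i$, no angular sliding helps --- you must exhibit an actual face of $G_i$'s embedding that accommodates a skinny copy of $G_0$. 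This is exactly the delicate part of the paper's argument (cutting $G_i$ at the level $M_0=\max I(G_0)$ into $G_i^-$ and $G_i^+$, examining the upper facial walk of $G_i^-$, and invoking support plus Lemma~\ref{lem:oddCycleStrong} once more to exclude the bad case), and your sketch omits it entirely. Similarly, to place several essential components with overlapping (non-nested) $I$-ranges disjointly, the paper first normalizes each one by Corollary~\ref{cor:augmentingEmbedding} (adding a min--max edge in each outer face) so that maxima and minima can be aligned at common angular coordinates; some such normalization is needed, and ``carving helical slots'' does not supply it.

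Two smaller points. Lemma~\ref{lem:oddCycleStrong} applied across components gives that no essential cycle's $I$-range in $\mathcal{D}(G)$ is \emph{contained} in the range of a disjoint path; combined with part~(ii) this yields pairwise non-nesting, i.e.\ the ranges are totally ordered under the ``both endpoints $\le$'' relation --- not a laminar family, which is essentially the opposite structure, and in any case the conclusion you want does not follow from laminarity alone. Also, your ``spanning path of another component'' need not exist (components need not have Hamiltonian paths); what you want is a path from the minimum to the maximum vertex of $G_j$, whose $I$-range already equals $I(G_j)$ since edges are radial. These are fixable, but the missing face-placement analysis for components nested inside an essential component's range is a genuine gap.
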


Let $v$ be a vertex and suppose that $B$ is a connected component of $G\setminus v$ with $\min B >v$.
By Lemma~\ref{lem:connectedStrong}, there exists at least one edge from $v$ to a vertex in $B$.

\begin{lemma}
\label{lem:1}
Let $v$ be a vertex and $B$ be a connected component of $G\setminus v$ with $\min B>v$.  Then
either $|V(B)|=1$ (and the vertex of $B$ has just one neighbor, $v$) or $B$ is essential.
\end{lemma}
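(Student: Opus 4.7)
I would suppose the lemma fails, placing $G$ either in (A) $|V(B)|=1$ but the unique vertex $u$ has $\ge 2$ edges to $v$, or in (B) $|V(B)|\ge 2$ and $B$ has no essential cycle in $\mathcal{D}(G)$. (When $|V(B)|=1$, $B$ has no cycles and so cannot be essential; hence the only remaining failure of the conclusion is (A).) Write $E_v$ for the edges joining $v$ to $V(B)$. In both cases I contradict the minimality of $G$ (first in $|V(G)|$, then in $|E(G)|$).

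For (A), I would use edge-minimality: let $e$ be one of the multi-edges $uv$. Then $G-e$ has the same vertex set, one fewer edge, and an inherited independently even radial drawing, so by minimality it admits a radial embedding $\mathcal{E}$. Another $uv$-edge $e'$ of $\mathcal{E}$ bounds two faces of $\mathcal{E}$, each with both $u$ and $v$ on its boundary; drawing $e$ as a radial curve parallel to $e'$ inside one of them yields a radial embedding of $G$, a contradiction.

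For (B), consider $\bar B := G[\{v\}\cup V(B)]$, which is connected because $E_v\neq\emptyset$. Since $v=\min\bar B$ and any essential cycle of $\mathcal{D}(\bar B)$ must use an edge of $E_v$ (otherwise it would be an essential cycle in $B$), the drawing $\mathcal{D}(\bar B)$ is weakly essential. Lemma~\ref{lem:weaklyEssential} then yields an $x$-monotone embedding $\mathcal{E}(\bar B)$, in which no cycle is essential. If $\bar B = G$, this is already a radial embedding of $G$, a contradiction; otherwise $K:=G-V(B)$ has strictly fewer vertices than $G$, inherits an independently even radial drawing, and by vertex-minimality has a radial embedding $\mathcal{E}(K)$. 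I then glue $\mathcal{E}(\bar B)$ and $\mathcal{E}(K)$ along $v$: I compress $\mathcal{E}(\bar B)$ in the $\mathbb{S}^1$-direction into an arbitrarily narrow strip above $v$ (preserving vertex levels and radiality) and insert the strip into a wedge of $\mathcal{E}(K)$ incident to $v$ on its upper side, interleaving the upper rotations at $v$ so that all of $E_v$ appears as a single consecutive block in that wedge. The result is a radial embedding of $G$, again contradicting minimality.

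The main obstacle is the final gluing step. I must verify that the chosen upper face $f$ of $\mathcal{E}(K)$ at $v$ has enough ``vertical room'' to accept the compressed $\mathcal{E}(\bar B)$---that for every height $h\in[v,\max B]$ there is a point of $f$ at height $h$ reachable from $v$ within $f$, so that the compressed copy of $B$ lies entirely inside $f$ and crosses no edge of $K$. This is a topological argument exploiting the non-essentiality of $\mathcal{E}(\bar B)$ (so $\bar B$ fits inside a disk-like region) and the fact that $B$ attaches to the rest of $G$ only through $v$; the support property in Theorem~\ref{thm:radialStrong}(ii) can be invoked to control the behavior of $\mathcal{E}(K)$ above $v$.
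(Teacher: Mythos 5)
Your reduction of the main case to Lemma~\ref{lem:weaklyEssential} (obtaining an $x$-monotone embedding of $B'=G[V(B)\cup\{v\}]$) is the same first move as in the paper, but the gluing step contains a genuine gap --- the very ``main obstacle'' you flag and do not close. After deleting $V(B)$ you apply minimality to $K=G-V(B)$ and obtain \emph{some} radial embedding $\mathcal{E}(K)$ supported by $\mathcal{D}(K)$; the only control Theorem~\ref{thm:radialStrong} gives you over $\mathcal{E}(K)$ is condition~(ii), which constrains \emph{essential} cycles only. Nothing prevents the embedding handed to you by minimality from placing $v$ inside a non-essential cycle $C$ of $K$ with $\min C<v<\max C<\max B$ (say $v$ is attached to $C$ by a single edge, and both sides of $C$ are available to $v$ in a radial embedding of $K$); then every face of $\mathcal{E}(K)$ incident to $v$ has maximum below $\max B$, and the compressed copy of $B'$ cannot be inserted at $v$ at all. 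This situation is perfectly consistent with the hypotheses: all edges of $B'$ cross $C$ evenly, so independent evenness yields no contradiction, and the support property is silent because $C$ is not essential. Hence non-essentiality of $\mathcal{E}(B')$ plus part~(ii) do not supply the ``vertical room'' your argument needs.

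The paper avoids this by not deleting $B$ outright: it replaces $B'$ by a single placeholder edge $e=vw$ with $w=\max B$, drawn along a path $P$ from $v$ to $w$ in $B'$ with the interior vertices suppressed. The resulting drawing of $G'$ is bounded and independently even, Lemma~\ref{lem:boundedStrong} restores radiality, and minimality applies because $|V(B)|\ge 2$ makes $G'$ smaller. Any radial embedding of $G'$ must route $e$ from $v$ up to level $\max B$, and that edge is precisely the vertical corridor along which the skinny copy of $\mathcal{E}(B')$ is reinserted; supportedness of the final embedding of $G$ is then immediate. If you want to keep your structure, you need such a placeholder (or some other device forcing an upper face at $v$ that reaches level $\max B$); it cannot be extracted from $\mathcal{E}(K)$ alone. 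Two smaller points: your case (A) is not a failure of the stated conclusion --- since $B$ is a component of $G\setminus v$ and loops are forbidden, a one-vertex $B$ automatically has $v$ as its only \emph{neighbour} (the statement speaks of neighbours, not edges), so no multi-edge case needs separate treatment here --- and in the main case you never verify that the glued embedding of $G$ is supported by $\mathcal{D}(G)$, which is part of what contradicting the choice of a minimal counterexample to Theorem~\ref{thm:radialStrong} requires.
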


\begin{lemma}
\label{lem:3}
Let $v$ be a vertex and $B$ be a connected component of $G\setminus v$ with $\min B>v$.
If $B$ is essential, then $v=v_1$.
\end{lemma}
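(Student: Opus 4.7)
The plan is to argue by contradiction. Suppose that $B$ is essential yet $v \neq v_1$; I will derive a violation of the minimality of $G$ as a counter-example.

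First I would split the problem across $v$ using minimality. Since $v_1 < v < \min B$, the subgraphs $G_1 := G[V(B) \cup \{v\}]$ and $G_2 := G - V(B)$ are both proper subgraphs of $G$: $G_1$ omits $v_1$, and $G_2$ omits the nonempty set $V(B)$. Each inherits an independently even radial drawing as a restriction of $\mathcal{D}(G)$, so by minimality each admits a radial embedding, call them $\mathcal{E}_1$ and $\mathcal{E}_2$. Observe that in $\mathcal{E}_1$ the vertex $v$ is the minimum of $G_1$, so $v$ sits on the boundary of the bottom annular face and $\mathcal{E}_1 \setminus \{v\}$ may be regarded as a drawing attached to $v$ strictly from above.

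Next I would try to reassemble $\mathcal{E}_1$ and $\mathcal{E}_2$ at $v$ to produce a radial embedding of $G$, contradicting the counterexample status of $G$. If $\mathcal{E}_1$ contains no essential cycle, then $\mathcal{E}_1 \setminus \{v\}$ is a disk-like drawing that fits into any face of $\mathcal{E}_2$ incident to $v$ from above; weaving the upper edges of $v$ in $\mathcal{E}_1$ into the upper rotation at $v$ in $\mathcal{E}_2$ inside that face gives the desired embedding. Thus the genuinely hard case is when $\mathcal{E}_1$ is itself essential: then $\mathcal{E}_1 \setminus \{v\}$ wraps around the cylinder and cannot be placed into a disk face of $\mathcal{E}_2$, so the insertion would require an annular face of $\mathcal{E}_2$ at $v$ reaching above $v$, which is not naturally available once $v \neq v_1$.

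In that essential case I would instead derive the contradiction directly via Lemma~\ref{lem:oddCycleStrong}. Let $C$ be an essential cycle in $B$. Since $G$ is connected (Lemma~\ref{lem:connectedStrong}) and $v_1 \neq v$, $v$ has a neighbor in the component of $G \setminus v$ containing $v_1$, so there is a path $Q$ from $v_1$ to $v$ lying entirely in that component and hence vertex-disjoint from $C \subseteq B$. I would extend $Q$ through $v$ by an edge $vw$ with $w \in V(B)$ and then by a path in $B$ that avoids $V(C)$ (except possibly at its last vertex) and reaches height $\max C$; the resulting path is vertex-disjoint from $C$ but its $I$-projection contains $I(C)$, contradicting Lemma~\ref{lem:oddCycleStrong}.

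The main obstacle I expect is the degenerate subcase in which no such extension exists inside $B$, i.e.\ when $B$ itself is essentially just the cycle $C$ (so $V(B) = V(C)$ and every vertex of $B$ lies on $C$). Here the path-finding argument breaks down, and I would fall back on minimality applied to a graph with one fewer edge: delete a suitable edge $e \in C$, obtaining $G - e$ with its inherited independently even radial drawing; by minimality $G - e$ has a radial embedding $\mathcal{E}$, and the remaining task is to show that in $\mathcal{E}$ the two endpoints of $e$ share a face so that $e$ can be drawn back radially, yielding an embedding of $G$ and again contradicting the counter-example hypothesis. This last step --- guaranteeing a common face after edge deletion in the presence of the essential structure around $B$ --- is where I expect the most careful topological bookkeeping.
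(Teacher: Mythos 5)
There is a genuine gap, and it lies exactly where you defer the work. In your ``essential'' case you try to contradict the drawing itself via Lemma~\ref{lem:oddCycleStrong}, by producing a path from $v_1$ through $v$ into $B$ that avoids $V(C)$ and reaches height $\max C$. No such path is guaranteed to exist --- indeed Lemma~\ref{lem:oddCycleStrong} is precisely the statement that a path disjoint from $C$ cannot project onto all of $I(C)$, so if your argument were available the drawing would already be contradictory. But the configuration in Lemma~\ref{lem:3} (an essential $B$ entirely above $v$ with $v\neq v_1$) is perfectly consistent with an independently even radial drawing, and even with a radial embedding (e.g.\ $v_1$--$v$ below an essential triangle attached to $v$); hence no contradiction with $\mathcal{D}(G)$ can be derived, and the contradiction must instead be with the \emph{minimality} of $G$, i.e.\ one must actually build a radial embedding of $G$. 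Your fallback subcase is also mischaracterized: the failure of your path extension is not the degenerate situation $V(B)=V(C)$ but the typical one, and your plan there (delete an edge of $C$, then show its endpoints share a face in an embedding of $G-e$) is exactly the unproven hard step, with no indication of why it should succeed.

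Your non-essential case has the same structural weakness. Having embedded $G_1=G[V(B)\cup\{v\}]$ and $G_2=G-V(B)$ separately by minimality, you claim a skinny copy of $\mathcal{E}_1\setminus\{v\}$ ``fits into any face of $\mathcal{E}_2$ incident to $v$ from above.'' That face need not extend anywhere near height $\max B$, so the insertion can be blocked; this is the crux the paper's proof is designed around. The paper splits $G$ into $G_1'$ (all components above $v$, plus $v$) and $G_2'$ (the rest, plus $v$, plus an auxiliary edge $e$ from $v$ up to $w=\max G_1'$) \emph{before} invoking minimality: the extra edge reserves a radial corridor in the embedding of $G_2'$, and the essentialness of $B$ is used, via part~(ii) of Theorem~\ref{thm:radialStrong} (support) together with Lemma~\ref{lem:oddCycleStrong} applied to paths in the \emph{other} components (each of which dips below $v$), to show $\max W_1 > \max(G_2'\setminus e)$ so that the two embeddings can be merged using Corollary~\ref{cor:augmentingEmbedding}. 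Your proposal contains neither the corridor-reserving edge nor the height comparison that makes the recombination legitimate, so as written it does not establish the lemma.
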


\begin{lemma}
\label{lem:2}
Suppose that $v,w\in V$ and $B$ is a connected component of $V\setminus\{v,w\}$ with
$v < \min B$, $\max B< w$, and there is at least one edge from $B$ to $v$ and at least one edge from $B$ to $w$.
Then $B$ is essential.
\end{lemma}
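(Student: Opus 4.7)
My plan is to argue by contradiction: assume that $B$ contains no essential cycle in the drawing inherited from $\mathcal{D}(G)$, and construct a radial embedding of $G$ to contradict minimality. Let $H := G[V(B) \cup \{v,w\}]$ with its inherited drawing, regarded as an ordered graph whose first and last vertices are $v$ and $w$. Any essential cycle $C$ of $\mathcal{D}(H)$ satisfies $v \le \min C$ and $\max C \le w$; if $C$ lay entirely inside $B$ it would be essential in $B$, contradicting the assumption, so every essential cycle of $\mathcal{D}(H)$ passes through $v$ or $w$, meaning $\mathcal{D}(H)$ is weakly essential. If $V(G) = V(B) \cup \{v,w\}$, then $v = v_1$ and $w = v_n$ in $G$, so $\mathcal{D}(G)$ is itself weakly essential and Lemma~\ref{lem:weaklyEssential} produces an $x$-monotone embedding of $G$---contradicting minimality.

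Suppose instead $V(G) \supsetneq V(B) \cup \{v,w\}$, so both $H$ and $K := G - V(B)$ are proper subgraphs of $G$. Minimality of $G$ combined with Lemma~\ref{lem:weaklyEssential} applied to $H$ yields an $x$-monotone embedding $\mathcal{E}(H)$; in particular the $B$-to-$v$ edges occupy a contiguous upper interval in the rotation at $v$ in $\mathcal{E}(H)$ and the $B$-to-$w$ edges occupy a contiguous lower interval at $w$. Pick any $v$-$w$ path $P$ through $B$ and add to $\mathcal{D}(K)$ a virtual edge $e = vw$ drawn parallel to $P$; let $K' := K + e$. For any edge $f$ of $K$ not incident to $\{v,w\}$, $f$ is independent from every edge of $P$, so $|e \cap f| \equiv \sum_{f' \in P} |f \cap f'| \equiv 0 \pmod 2$; edges of $K$ incident to $v$ or $w$ share an endpoint with $e$ so carry no parity requirement. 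Hence $K'$ has an independently even radial drawing on strictly fewer vertices than $G$, and minimality gives a radial embedding $\mathcal{E}(K')$.

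Deleting $e$ from $\mathcal{E}(K')$ exposes a disk-face $F$ of $\mathcal{E}(K)$ incident to $v$ from above and to $w$ from below. Placing $\mathcal{E}(H) \setminus \{v,w\}$ into this disk and reattaching the $B$-to-$\{v,w\}$ edges in the cyclic order prescribed by $\mathcal{E}(H)$ produces a radial embedding of $G$, yielding the desired contradiction.

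The hardest step will be the one just described: verifying that the insertion into $F$ is topologically consistent with the rotations at $v$ and $w$ coming from $\mathcal{E}(K')$. Because $\mathcal{E}(H)$ is $x$-monotone, at each of $v$ and $w$ the $B$-edges already form a contiguous interval within a thin strip, and the wedges of $\mathcal{E}(K')$ at $v$ and $w$ adjacent to $e$ are exactly the slots into which these contiguous intervals must be placed. A local flip in the rotation at $v$ or $w$, in the spirit of the proof of Lemma~\ref{thm:pendant}, may be needed to orient the contiguous arc correctly; a secondary point is to check that the exposed face $F$ is indeed a disk (not an annulus), which follows from the fact that $e$ is non-essential in $\mathcal{E}(K')$ since it corresponds via ``supported by'' to the non-essential curve $P$ in $\mathcal{D}(G)$.
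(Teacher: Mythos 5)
Your construction is essentially the paper's: assume $B$ has no essential cycle, take $B'=G[V(B)\cup\{v,w\}]$, observe its drawing is weakly essential and get an $x$-monotone embedding via Lemma~\ref{lem:weaklyEssential}, contract $B'$ to a virtual edge $e=vw$ drawn along a $v$--$w$ path $P$ through $B$, apply minimality to the smaller graph, and re-insert a skinny copy of the $x$-monotone embedding in place of $e$. However, there is a genuine gap: you never verify that the radial embedding of $G$ you produce is \emph{supported by} $\mathcal{D}(G)$. Since Lemma~\ref{lem:2} concerns a minimal counterexample to Theorem~\ref{thm:radialStrong}, whose statement includes part~(ii), merely exhibiting some radial embedding of $G$ does not contradict the choice of $G$; you must exhibit one satisfying the supported-by clause. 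The paper spends the second half of its proof on exactly this point: every essential cycle $C$ of the new embedding meets $B$ in a $v$--$w$ path or not at all, is converted to an essential cycle $C'$ of the embedding of the contracted graph, the inductive supported-by property yields an essential cycle $C''$ in the drawing of the contracted graph with $I(C'')\subseteq I(C')$, and replacing $e$ by $P$ (homotopic rel endpoints, so essentiality is preserved) gives an essential cycle in $\mathcal{D}(G)$ with the required $I$-extent. Your only mention of ``supported by'' is in the aside about the face $F$ being a disk, and there it is misapplied: being ``essential'' is a property of closed curves/cycles, not of a single edge $e$ or of the path $P$, and the supported-by relation says nothing about individual edges.

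Two smaller points. First, the drawing of $K+e$ you describe is not a radial drawing: $e$ drawn parallel to $P$ is bounded but generally not radial (its $I$-projection need not be injective), so before invoking minimality you need Lemma~\ref{lem:boundedStrong} to convert the bounded, independently even drawing into a radial one, as the paper does; your parity computation for $e$ against edges of $K$ is fine. Second, the worries about matching rotations at $v$ and $w$ and about whether $F$ is a disk are unnecessary: since $e$ is an embedded radial edge of $\mathcal{E}(K')$, a thin neighborhood of $e$ is crossing-free, and the skinny $x$-monotone copy of $\mathcal{E}(B')$ can simply be placed inside that strip, meeting the rest of the embedding only at $v$ and $w$; no constraint ties its attachment order to the rotations of $\mathcal{D}(G)$, because the strong theorem does not preserve rotation systems. (Also note the paper treats the case where $vw\in E(G)$ by reusing that edge rather than adding a parallel copy; your variant still yields a smaller graph, so this is harmless.)
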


\subsection{Proof of Theorem~\ref{thm:radialStrong}}

\label{sec:proofStrong}

Let $G$ be a minimal counter-example to the theorem given
by an independently even radial drawing $\mathcal{D}(G)$. We already established part~(i) of
the theorem in Lemma~\ref{lem:weaklyEssential}. We also know, by Lemma~\ref{lem:connectedStrong}, that $G$ is connected.

If the drawing is even, then Theorem~\ref{thm:radial} gives us a radial embedding of $G$, and
part~(ii) of the theorem is satisfied since Theorem~\ref{thm:radial} maintains the parity of winding numbers of cycles.
So there must be two adjacent edges crossing oddly.

Recall that flipping a pair of consecutive edges in an upper or lower rotation at a vertex changes the parity of crossing between the edges.

First, we repeatedly flip pairs of consecutive edges that cross oddly in the upper or lower rotation at a vertex until none remain.
Let $e,f$ be an odd pair of minimum distance in the---without loss of generality---upper rotation of a vertex $v$.
Then let $g$ be any edge in the upper rotation between $e$ and $f$, which must
cross both evenly.

\begin{lemma}
\label{lem:aux1}
Suppose that in $\mathcal{D}(G)$ there exist three paths $P,Q$ and $Q'$  starting at $v$ such that $\{e,f,g\}$ is the set of first edges
on those paths, with $\min P < v = \min Q = \min Q'$
and $V(P)\cap V(Q) = V(P)\cap V(Q')=\{v\}$.
Then it cannot be that both $\max Q > \max P$ and $\max Q'> \max P$.
\end{lemma}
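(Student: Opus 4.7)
We argue by contradiction, assuming that both $\max Q > \max P$ and $\max Q' > \max P$. The strategy is to produce an essential cycle $C$ whose $I$-range lies in $[v,\max P]$ and which is vertex-disjoint from a sub-path of $P$ whose $I$-range contains $I(C)$, thereby contradicting Lemma~\ref{lem:oddCycleStrong}. First I compute the parity of crossings between the three paths as curves: writing $e_P,e_Q,e_{Q'}$ for their first edges (so $\{e_P,e_Q,e_{Q'}\}=\{e,f,g\}$), the assumption $V(P)\cap V(Q)=V(P)\cap V(Q')=\{v\}$ implies that every edge of $P\setminus\{e_P\}$ is independent of every edge of $Q$ (resp.\ $Q'$), and $e_P$ is independent of every edge of $Q\setminus\{e_Q\}$ (resp.\ $Q'\setminus\{e_{Q'}\}$), so summing parities shows the total crossing parity between the curves $P$ and $Q$ (resp.\ $P$ and $Q'$) equals that of $e_P$ and $e_Q$ (resp.\ $e_P$ and $e_{Q'}$). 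Since $e,f$ cross oddly while $g$ crosses each of $e,f$ evenly, we conclude: if $e_P=g$ then $P$ crosses both $Q,Q'$ evenly, while otherwise (say $e_P=e$) $P$ crosses oddly whichever of $Q,Q'$ starts with $f$, and crosses the other evenly.

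Next I would construct $C$ using parts of $Q$ and $Q'$ (possibly closed by an auxiliary path, obtained from the connectedness of $G$ via Lemma~\ref{lem:connectedStrong}). Setting $P^*:=P$ with its first edge $e_P$ removed, we have $v\notin V(P^*)$ and $V(P)\cap V(Q\cup Q')=\{v\}$, so $P^*$ is vertex-disjoint from $Q\cup Q'$; the auxiliary closing path (if needed) is also chosen disjoint from $P^*$. One computes $I(P^*)=[\min P,\max P]$ and aims for $\max C\le\max P$, which gives $I(P^*)\supseteq I(C)=[v,\max C]$ as required. The essentiality of $C$ is checked via Lemma~\ref{lem:windingNumberInvariance}: the two arcs of $C$ joining $\min C=v$ to $\max C$ enter these extremes in inverse rotation order---a consequence of the rotation order $e<_vg<_vf$ at $v$ together with the odd-pair property of $(e,f)$ and the parity computation above, which together determine the relative rotations at the extreme vertices.

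The main technical difficulty is arranging $\max C\le\max P$. Since $\max Q,\max Q'>\max P$, any cycle obtained by naively joining the tops of $Q$ and $Q'$ fails this bound. The resolution is to cut $Q$ and $Q'$ as low as possible: either at an early common vertex lying at level at most $\max P$ (if one exists in $Q\cup Q'$), or via a closing path through the rest of $G$ that we force to stay below $\max P$ by using $P$ itself as a ``barrier'' at its topmost level and applying Lemma~\ref{lem:oddCycleStrong} a second time to rule out a higher alternative. The minimum-distance property of the odd pair $(e,f)$ is essential here: it guarantees that every edge strictly between $e$ and $f$ in the rotation at $v$ (in particular $g$) crosses both evenly, which is what enables the low essential configuration above and ultimately yields the contradiction. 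I expect this last step---certifying that the constructed $C$ both is essential and has $\max C\le\max P$ in the case when $Q$ and $Q'$ are vertex-disjoint beyond $v$---to be the main obstacle.
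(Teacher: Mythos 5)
Your opening parity computation is sound and matches an ingredient of the paper's argument, but the route you then take has a genuine gap. You propose to reach a contradiction by exhibiting, inside the graph, an essential cycle $C$ through $v$ with $\max C\le\max P$, disjoint from $P^*$, and then invoking Lemma~\ref{lem:oddCycleStrong}. Nothing in the hypotheses supplies such a cycle: $Q$ and $Q'$ need not meet again at all below $\max P$ (or anywhere), the closing path guaranteed by connectivity may climb arbitrarily high and may pass through vertices of $P^*$ (a radial drawing does not prevent the rest of $G$ from occupying the region $I>\max P$, and $P$ is not a ``barrier'' in any topological sense), and even if some cycle through parts of $Q$ and $Q'$ existed, its essentiality cannot be forced. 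Your appeal to Lemma~\ref{lem:windingNumberInvariance} does not work here: that lemma needs an even drawing and compares the rotations at \emph{both} extreme vertices of the cycle, whereas you only control the rotation at $v$; the rotation at the top vertex of your would-be cycle is completely unconstrained, so the cycle may well be non-essential. In short, the step ``certifying that the constructed $C$ both is essential and has $\max C\le\max P$'' that you flag as the main obstacle is not merely technical---it is false in general, so the contradiction via Lemma~\ref{lem:oddCycleStrong} cannot be completed.

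The paper's proof avoids constructing any cycle in $G$. Take $P,Q,Q'$ minimal, so that only the last vertices of $Q,Q'$ lie in $I>\max P$ and only the last vertex of $P$ lies in $I<v$. Then close $Q\cup Q'$ by an \emph{auxiliary arc} $C^*$ drawn freely (not along edges of $G$) in the region $I>\max P$ between the endpoints of $Q$ and $Q'$, chosen so that the resulting closed curve $C$ is non-essential; two-color its complement. Since $\min C=v$, the far endpoint $u<v$ of $P$ lies in the exterior of $C$, and $P$ (whose curve stays in $I\le\max P$) never meets $C^*$. Because $g$ lies between $e$ and $f$ in the upper rotation at $v$, the wedge at $v$ between the first edges of $Q$ and $Q'$ determines whether $P$ leaves $v$ into the interior (first edge $g$) or the exterior (first edge $e$ or $f$) of $C$. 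In either case the Jordan-curve parity of crossings between $P$ and $C$ (odd, respectively even) contradicts the sum of the edge-pair parities: independent pairs cross evenly, and the only adjacent pairs are the two pairs among $\{e,f,g\}$ meeting at $v$, whose parities (from the odd pair $e,f$ of minimum distance and $g$ between them) give exactly the opposite total. Your parity bookkeeping could be reused in this argument, but the missing idea is precisely the free closing arc above $\max P$ and the interior/exterior analysis at $v$, in place of any appeal to Lemma~\ref{lem:oddCycleStrong}.
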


\begin{wrapfigure}{R}{.3\textwidth}
\centering
\includegraphics[scale=0.6]{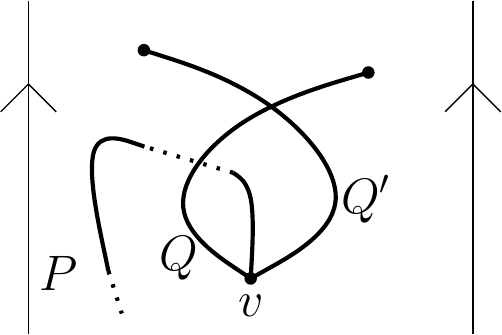}
\caption{$P$, $Q$, and $Q'$ from Lemma~\ref{lem:aux1}.}
\label{fig:aux1}
\end{wrapfigure}

The proof of Theorem~\ref{thm:radialStrong} splits into two cases.

\noindent
{\bf Case 1: Assume that $v \not= v_1,v_n$.}
Let $G_e,G_f$ and $G_g$ be the (not necessarily distinct) components of $G\setminus\{v\}$
containing an endpoint of $e,f,g$ respectively.
  We are interested in showing that one or more of these intersects $\{u\in V: u<v\}$.  For each of $G_e,G_f$ and $G_g$, if it does not intersect that region, then by Lemma~\ref{lem:1}, it must be either (i) a single vertex or (ii) essential. However, case~(ii) is impossible,
since then $v_1=v$ by Lemma~\ref{lem:3} (contradiction).
Suppose that two of them, let's say $G_e$ and $G_f$ are single vertices $w_e$ and $w_f$, respectively, with $w_e>w_f$.
 Then remove $w_f$ and the edge $vw_f$ from $G$ and apply the minimality of $G$.
 Thus, we obtain
a  radial embedding of  $G\setminus w_f$ supported by $\mathcal{D}(G\setminus w_f)$.
 We re-insert $vw_f$ into the obtained embedding of $G\setminus w_f$ without crossings by drawing the edge $vw_f$ alongside the longer edge $vw_e$.  Thus, we can assume that at most one of $G_e,G_f,G_g$ is of type (i) and none is of type (ii), which means that at least one of them must intersect $\{u\in V: u<v\}$.

Let $P$ be a path from $v$ through $e$, $f$, or $g$, which ends in the region $I<v$,
chosen so as to minimize $\max P$.  Let $w_P$ be its vertex of max $I$-coordinate.
Choose a minimal such $P$, so that every vertex except its last is in the region $I\geq v$.
Let $P_1$ be the initial portion of $P$, from $v$ to $w_P$, and let $P_2$ be the
later portion of $P$, from $w_P$ to the region $I<v$.

Let $H$ be the subgraph induced by $\{u\in V: v < u < w_P\}$.  Let $H_2$
be the component of $H$ that intersects $P_2$ (empty if $P_2$ has just one edge)
and let $H_e,H_f,H_g$ be the (not necessarily distinct) components of $H$ incident to $e,f,g$,
respectively (empty if the upper endpoint of that edge is in the region $I>w_P$).
By the choice of $P$, $H_2$ is disjoint (and distinct) from each of $H_e,H_f,H_g$
and there is no edge from $H_e\cup H_f\cup H_g$ to the region $I<v$.

Suppose that $H_e$ is non-empty and that $v$ is the only vertex adjacent to $H_e$.
By Lemma~\ref{lem:1} and Lemma~\ref{lem:3} $H_e$ is a single vertex.
Then its only incident edge is $e$
and we proceed as follows.

Remove $e$ and its upper endpoint ``$v_e$'' and apply the minimality of $G$. Thus, we obtain
a  radial embedding $\mathcal{E}(G\setminus v_e)$ of  $G\setminus v_e$ that is supported by $\mathcal{D}(G\setminus v_e)$. If there is an edge in $G\setminus v_e$ from $v$ to $w'$ with $w' \ge w_P$, simply draw $e$ alongside that edge in the obtained embedding.  So let's assume that there is no such edge.
Then $P_1$ contains at least one vertex in the region $v<I<w_P$.
Let $H_1$ be the component of $H$ that intersects $P_1$ and let $H_1'$ be the subgraph induced by $V(H_1)\cup\{v\}$.
By the choice of $P$, $H_1$ is not incident to an edge intersecting the region $I<v$.
The radial embedding $\mathcal{E}(H_1')$ is non-essential because $I(H_1')\subseteq I(P_2)$, by Lemma~\ref{lem:oddCycleStrong}.
Let $H_1''$ denote the union of $H_1'$ with all its incident edges (if any) intersecting the region $I>w_P$.
We can draw $e$ alongside the boundary of the lower outer face of $H_1''$ in $\mathcal{E}(H_1'')$ so that it is bounded. Hence, we can apply Lemma~\ref{lem:boundedStrong} to re-embed $e$ without crossings (contradiction).

Thus, if $H_e$ is non-empty then it must be adjacent to vertices other than $v$, which means
vertices in either region $I<v$ or $I\ge w_P$, where the former is ruled out due to the choice of $P$.
By similar arguments, $H_f$ and $H_g$ have neighbors in $I\ge w_P$ unless they are empty.
Hence, we have the following.

\begin{lemma}
\label{lem:neigh}
Every non-empty subgraph $H_e,H_f$ and $H_g$ is non-essential and adjacent to a vertex in $I\ge w_P$.
(If $H_e$ is empty, then $\max e \ge w_P$, and similarly for $H_f$ and $H_g$.)
\end{lemma}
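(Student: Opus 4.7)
The lemma bundles three claims. The parenthetical is immediate: if $H_e$ is empty then $\max e$, which exceeds $v$ since $e$ is an upper edge at $v$, cannot lie in $(v,w_P)$, so $\max e\ge w_P$; the same applies to $f$ and $g$. The adjacency claim is essentially already proved in the discussion preceding the lemma: any external neighbor of $H_e$ in $I<v$ would combine with $e$ and a walk inside $H_e$ to give a path through $e$ that ends in $I<v$ but has max $I$-coordinate in $(v,w_P)$, contradicting the minimality of $\max P$; if $v$ were the only external neighbor, Lemmas~\ref{lem:1} and~\ref{lem:3} (with $v\ne v_1$) would force $H_e$ to be a single vertex, and the Lemma~\ref{lem:boundedStrong}-based re-embedding argument already given would yield a contradiction. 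Hence at least one external neighbor must lie in $I\ge w_P$, and the substantive new content of the lemma is the non-essentiality claim.

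My plan for non-essentiality is a direct application of Lemma~\ref{lem:oddCycleStrong} with an essential cycle $C\subseteq H_e$ and the subpath $P_2$ of $P$. Since all vertices of $H_e$ lie strictly between $v$ and $w_P$, we have $I(C)\subseteq(v,w_P)$. The inclusion $I(P_2)\supseteq I(C)$ is immediate from connectedness: $P_2$ is a continuous curve from $w_P$ down to a vertex in $I<v$, so $I(P_2)$ is an interval containing both $w_P$ and a value strictly below $v$, hence contains $[v,w_P]\supseteq I(C)$. For vertex-disjointness, the endpoints $w_P$ and the terminal vertex of $P_2$ (which lies in $I<v$) are both outside $(v,w_P)$, while every internal vertex of $P_2$ has $I$-coordinate strictly between $v$ and $w_P$ (since $P$ is simple and only its last vertex lies below $v$); such a vertex therefore belongs to $H$, and being on $P_2$ it lies in the component $H_2$, which by the discussion preceding the lemma is distinct from $H_e$. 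Thus $V(C)\cap V(P_2)=\emptyset$, and Lemma~\ref{lem:oddCycleStrong} yields the desired contradiction. The same argument applies verbatim to $H_f$ and $H_g$.

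The main obstacle is the bookkeeping for vertex-disjointness: the argument hinges on the earlier assertion that $H_2$ is a component of $H$ distinct from each of $H_e,H_f,H_g$, together with the structural observation that internal vertices of $P_2$ cannot escape the strip $(v,w_P)$. Once those two facts are in place, everything reduces to a clean instance of Lemma~\ref{lem:oddCycleStrong} on top of results already developed.
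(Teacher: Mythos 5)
Your proof is correct and follows essentially the same route as the paper, whose proof of this lemma is the discussion immediately preceding it: the parenthetical remark and the adjacency claim are exactly that case analysis (external neighbors below $v$ contradict the minimality of $\max P$; the case where $v$ is the only external neighbor is excluded via Lemmas~\ref{lem:1} and~\ref{lem:3} and the re-embedding argument using Lemma~\ref{lem:boundedStrong}). Your non-essentiality argument---Lemma~\ref{lem:oddCycleStrong} applied to $P_2$ and a hypothetical essential cycle in $H_e$, using that internal vertices of $P_2$ lie in $H_2$, which is disjoint from $H_e,H_f,H_g$---is precisely the mechanism the paper relies on (it is spelled out only for $H_1'$ there), so you have merely made explicit a step the paper leaves implicit.
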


Without loss of generality we suppose that $P$ goes through $e$, since otherwise we can redraw near
$v$ to flip the relative order of the ends of $e,f,g$ at $v$ so that $P$ ends at the leftmost edge,
renaming it $e$, renaming the middle one $g$, and the right one $f$---then flipping if needed we
recover the earlier crossing parities between each pair of edges $(e,f)$, $(e,g)$, $(f,g)$.

Consider minimal paths $P_f$ and $P_g$ from $v$, with first edge $f$ and $g$, respectively,
that end in $\{u: u < v \textrm{\ or\ } u \geq w_P \}$.  (These must exist because if $H_f$
does not have neighbors in $I \ge w_P$ then $H_f$ must be empty, which means that $f$
is an edge from $v$ to $I \ge w_P$ and thus we can let $P_f$ be $f$ with its endpoints.  Likewise
for $H_g$ to get $P_g$.)

If neither $P_e$ nor $P_f$ intersects $P_1$,
then neither intersects $P$ and both end in $I>w_P$, which contradicts Lemma~\ref{lem:aux1}.
Thus, we may
assume that there exists a path from $v$ through $f$  or $g$ to $P_1$ which lies in the region
$v\leq I \leq w_P$.
Hence, there exists a cycle through $e,v,f$ or $e,v,g$ which lies in the region $v\leq I \leq w_P$.

We choose $C$ so as to minimize $\max C$ and to be  essential if possible.
Let $w$ be the vertex with $w=\max C$. Without loss of generality, we may assume that $f\in E(C)$.
Let $B_e$ be the component of $\{u: v<u<w\}$ that is incident to $e$, and let
$B_e'$ be the graph formed from the union of $B_e$ and all incident edges (including $e$)
with their endpoints.
Define $B_f,B_f',B_g,B_g'$ similarly, but in the case that the upper endpoint of $g$ is not in the
region $v<I<w$, let $B_g=\emptyset$ and let $B_g'$ be just $g$ with its endpoints.

By the choice of $C$ we have $B_e\cap B_f = B_e \cap B_g = B_f \cap B_g = \emptyset$.
None of $B_e$, $B_f$ and $B_g$ is joined by an edge with a vertex $u<v$ by the choice of $P$.

First consider the case that $C$ is not essential.
Since $g$ crosses every edge of $C$ evenly
and $g$ is between $e$ and $f$ near $v$---which is in the interior of $C$---
the other endpoint of $g$ must be in the interior of $C$ or on $C$.  In the former case,
every vertex of $B_g$ must be in the interior of $C$ because $C$ and $B_g$ are disjoint so their edges cross evenly.
For the same reason $B_g$ cannot be adjacent to any vertices in the region $I>w$, so $V(B_g')\setminus V(B_g)\subseteq
\{v,w\}$.  By Lemma~\ref{lem:neigh},
$B_g'$ includes $w$ and $w=w_P$ and $H_e$ is non-essential, but then $B_g$ is essential by Lemma~\ref{lem:2},
a contradiction since $B_g\subseteq H_g$.
Hence the upper endpoint of $g$ is in $C$.  By the choice of $C$, it must be $w$; i.e., $g=vw$.
Let $C'$ be the cycle formed from $g$ and either of the paths from $v$ to $w$ in $C$.
since $\max C'=\max C$, when we chose $C$, we could have chosen $C'$ instead.
Then $C'$ must be non-essential.
Thus, the preceding argument all applies with $C'$ replacing $C$, implying that $e=vw$ or $f=vw$.
The multi-edge $vw$ gives either a direct contradiction if you want to think of it that way, or else
remove one, apply induction, and redraw it alongside its parallel edge.

Hence, $C$ is essential, and by the choice of $C$ and $P$, and by Lemma~\ref{lem:oddCycleStrong}  applied to $C$ and $P_2$ we obtain that
$w_P=w$.
Thus, $B_e=H_e$, $B_f=H_f$ and $B_g=H_g$. Suppose that $g=vw$. The union $C\cup g$ contains two cycles  through $g$ one of which is non-essential in $\mathcal{D}(G)$, since otherwise $C$ would not be essential by a simple parity argument. By applying the argument in the previous paragraph to the non-essential cycle, it follows that $B_e=e=vw$ or $B_f=f=vw$. Then we can remove one of the multi-edges $vw$
and obtain a contradiction with the choice of $G$.
Hence, we assume that $g\not=vw$. By Lemma~\ref{lem:neigh}, $\mathcal{D}(B_g')$ intersects the region $I\ge w=w_P$.
Furthermore, by Lemma~\ref{lem:2} and Lemma~\ref{lem:neigh},
$V(B_g')\not\subseteq V(B)\cup\{\{v,w\}$, so $\mathcal{D}(B_g')$ intersects the region $I>w$.
By applying Lemma~\ref{lem:2} and Lemma~\ref{lem:neigh} to $B_e$ likewise, the drawing $\mathcal{D}(B_e')$ also intersects the region
$I>w$, unless $e=vw$. In what follows we show that the former cannot happen. Then by symmetry the same applies to $f$, and hence, we obtain a multi-edge $vw$ contradicting the minimality of $G$, which completes the proof.
Let $Q'$ be a shortest path in $B_g'$ starting at $v$ with $g$ and ending in the region $I>w$.
By assuming that $\mathcal{D}(B_e')$ intersects the region
$I>w$, we may let $Q\subseteq B_e'\setminus v$ be a shortest path in $B_e'\setminus v$ starting with $e'$ and ending in the region $I>w$.

We modify the drawing of $e$ and $f$ near $v$ so that they switch positions in the rotation at $v$; then
$g$ crosses both $e$ and $f$ oddly, $e$ and $f$ cross evenly, and the edge $g$
is between $e$ and $f$. Furthermore, we modify drawings of edges of $C$ near the vertices of $C$
so that every pair of edges of $C$ cross each other evenly; this will not affect the upper rotation at $v$.
 We correct the lower rotation at $w$ so that the first edge on $P_2$,
let's say $g'$, is between the edges $e'\in B_e'$ and $f'\in B_f'$ on $C$.
Since $C$ is essential, the edge $g'$ crosses $C$ evenly since $P_2$ begins and ends below $C$.
Since $G$ is independently even, $g'$ crosses both $e'$ and $f'$ either oddly or evenly.
In the next paragraph, we show that  the edge $g'$  crosses  $e'$ oddly.

Let $C_e$ be a cycle consisting of $Q'$, the
part of $C$ between $v$ and $w$ through $B_e'$,
and a new edge edge from $w$ to the upper endpoint of $Q'$.
For convenience, we can make the new edge drawn radially,
such that it crosses its incident edge in $Q'$ evenly
(it cannot cross any other edge of $C'$).
Then every two edges in $C_e$ cross evenly except for the pair $g,e$.
Consider the two-coloring of the complement of $C_e$ in $\mathcal{C}$.
Because the two paths in $C_e$ from $v$ to $w$ cross each other oddly,
the color immediately to the right (left) of $e$ at $v$ will be the same as the color
to immediately to the left (right) of $e'$ at $w$.
By Lemma~\ref{lem:windingNumberInvariance} applied to the essential cycle $C$,
$f$ and $g$ are to the right (left) of $e$ at $v$ if and only if $f'$ and $g'$
are to the left (right) of $e'$ at $w$.  Therefore,
the upper wedge at $v$ between $e$ and $g$ will have the same color as the lower wedge between
$e'$ and $g'$ at $w$; the latter implies that the end of $g'$ near $w$ will have have that color as well.
The entire region $I<v$ must have the opposite color as the upper wedge at $v$, and $P_2$
has an endpoint in this region.  Thus, the two ends of $P_2$ have different colors.
Then $P_2$ must cross $C_e$ oddly.
Since the drawing of $G$ is independently even, it must be that $g'$ crosses $e'$  oddly.

\begin{wrapfigure}{R}{.3\textwidth}
\centering
\includegraphics[scale=0.6]{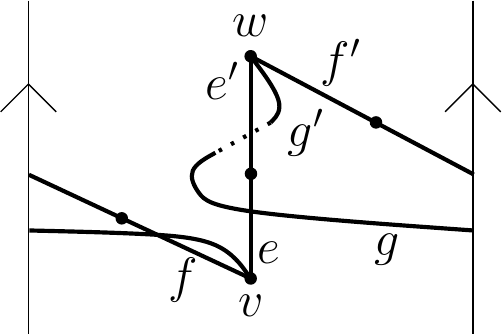}
\caption{Applying Lemma~\ref{lem:aux1} upside down.}\label{fig:updown}
\end{wrapfigure}
As shown earlier, if $g'$ crosses $e'$ oddly then $g'$ also crosses $f'$ oddly.
Let $Q\subseteq B_e'\setminus v$ be a shortest path in $B_e'\setminus v$ starting with $w,e'$ and ending in the region $I>w_P$.
Then we can apply Lemma~\ref{lem:aux1} upside down:
the role of ``$v$'' in Lemma~\ref{lem:aux1} is played by $w$,
the role of ``$P$'' by $Q$, ``$Q$'' is $P_2$ and
 ``$Q'$'' is the part of $C$ between $v$ and $w$ through $B_f'$  (contradiction).
 Indeed, $P_2$ is internally disjoint from both $Q$ and $C$ by the choice of $P$,
 and $Q$ is internally disjoint from  $B_f$.

\smallskip
\noindent
{\bf Case 2: Assume that $v = v_1$ or $v=v_n$.}
We can assume that $G$ does not contain edge $v_1v_n$, since otherwise
$\mathcal{D}(G)$ is weakly essential by Lemma~\ref{lem:oddCycleStrong}, and we are done by Lemma~\ref{lem:weaklyEssential}.
We can also suppose that only  pairs of edges at $v_1$ or $v_n$ cross an odd number of times. Otherwise, we end up in the previous case.
\begin{wrapfigure}{R}{.5\textwidth}
\centering
\includegraphics[scale=0.5]{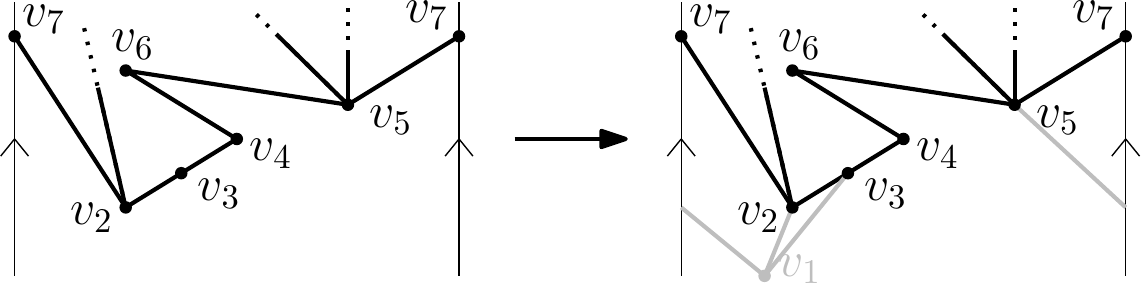}
\caption{Exposing $v_6$ on the lower outer face.}\label{fig:exp}
\end{wrapfigure}
Turn $G$ into a graph $G'$ with pendant edges as described in the paragraph preceding Lemma~\ref{thm:pendant}; then Lemma~\ref{thm:pendant} implies that $G'$ has an even drawing, so by Theorem~\ref{thm:radial} it has a radial embedding. We can redraw pendant edges and identify endpoints to obtain a radial embedding of $G$, but we need to do this carefully to satisfy part~(ii) of the theorem if $G'$ is essential:
We ``expose'' (see Fig.~\ref{fig:exp}) the maximum vertex of the lower face boundary of $G'$ so that it remains on the lower face boundary of $G$ (and likewise for the the minimum vertex of the upper face boundary).
Any essential cycle $C$ in $G$ not present in $G'$ passes through  $v_1$ or $v_n$.
In order to satisfy part~(ii) we need an essential cycle $C'$ in the embedding of $G'$ for which $[\min C',\max C']\subseteq [\min C, \max C]$.
However, a lower or upper facial walk of $G'$ contains such a cycle.

\section{Algorithm}
\label{sec:Alg}

Theorem~\ref{thm:radiallStrong} reduces radial planarity testing to a system of linear equations over $\mathbb{Z}/2\mathbb{Z}$. For planarity testing, systems like this were first constructed by Wu and Tutte~\cite[Section 1.4.2]{S14}.

Unlike in the case of $x$-monotone drawings, two drawings of an edge $e$ with end vertices fixed
cannot necessarily be obtained one from another by a continuous deformation during which we keep the drawing of $e$ radial:
up to a continuous deformation, two radial drawings of an edge differ by a certain number of (Dehn) twists.
We perform a twist of $e=uv$, $u<v$ very close to $v$,
i.e., the twist is carried out by removing a small portion $P_e$ of $e$ such
that we have $w\not\in I(P_e)$, for all vertices $w$, and reconnecting
the severed pieces of $e$ by a curve intersecting every edge $e'$, s.t.  $I(P_e)\subset I(e')$, exactly once.
Observe that with respect to the parity of crossings between edges performing a  twist  of $e$ close to $v$ equals performing
an edge-vertex switch of $e$ with all the vertices $w<v$ (including those for which $w<u$). Hence, the orientation of the twist does not
matter, and any  twist of $e$ keeping $e$  radial can be simulated by a  twist of $e$ very close to $v$
and a set of edge-vertex switches of $e$ with certain vertices $w$, for which  $u<w<v$.

By the previous paragraph a linear system for testing radial planarity can be constructed as follows.
The system has a variable $x_{e,v}$ for every edge-vertex switch $(e,v)$ such that $v\in I(e)$, and a variable $x_e$ for every edge twist.
Given an arbitrary radial drawing of $G$ we denote by $\cro(e,f)$ the parity of the number of crossings between $e$ and $f$.
In the linear system, for each pair of independent edges $(e,f)=(uv,wz)$, where $u<v$, $w<z$, $u<w$, and $w<v$, we require

\[ \cro(e,f) \equiv \left\{\begin{array}{lcl}
                    x_{e,w}+x_{e,z}+x_f \bmod{2}&& \mbox{if $z<v$, and} \\
                    x_{e,w}+x_{f,v}+x_e \bmod{2} && \mbox{if $z>v$.}
 \end{array}\right.  \]
Then $G$ is radial planar if and only if this linear system has a solution.

\section{Open Questions}
\label{sec:openQuestions}

We conjecture that Theorem~\ref{thm:radial}, and---in light of~\cite[Section 2]{FKMP15}---its algorithmic consequences, extend to \emph{bounded drawings}~\cite{F16} on a cylinder, defined as follows. We are given a pair $(G,\gamma)$ of a graph $G$ and a map    $\gamma: V \rightarrow \mathbb{N}$, and consider cylindrical drawings of $G$ in which (i) $u <v$ whenever $\gamma(u)<\gamma(v)$ for $u,v\in V$, and (ii) $\gamma(u)\leq\gamma(w)\leq \gamma(v)$, where $uv\in E$ and $\gamma(u)\leq \gamma(v)$, whenever $I(w)\in I(uv)$. By Lemma~\ref{lem:boundedStrong}, radial planarity is the special case in which $\gamma$ is injective.

In the plane, such a result is already known: a weak Hanani-Tutte variant for bounded embeddings in the plane~\cite{F14}. (A more general result was proved by M.~Skopenkov in a different context~\cite[Theorem 1.5]{S03}.) This together with a result showing that  edges can be made $x$-monotone~\cite[Lemma 1]{F16} shows that the corresponding planarity variant coincides with strip planarity~\cite{ADDF13}.  We do not know whether  projections of edges to $I$ can be made injective in
bounded embeddings on the cylinder, though we conjecture that this is the case.

If the previous conjecture holds, bounded embeddings on the cylinder can be treated as clustered planar embeddings~\cite{FCEa95,FCEb95} where all the clusters are pairwise nested.
 The complexity status of this special case of $c$-planarity is open to the best of our knowledge. The counter-examples in~\cite[Section 6,8]{FKMP15}, a Hanani-Tutte theorem for this setting would be the most general direct extension of the Hanani-Tutte theorem to clustered planar drawings  that we can hope for.

\bibliographystyle{plain}
\bibliography{HTTorus}

\appendix
\section{Additional Material for Section~\ref{sec:weaRadialkHT}}

We claimed that Theorem~\ref{thm:radiall} easily follows from Theorem~\ref{thm:radial} using the construction from~\cite[Section 4.2]{FPSS12} that was used to reduce level-planarity to $x$-monotonicity. The construction works as follows: Given an even radial drawing of a leveled graph $G$, consider each level $I=c$ with more than one vertex.  For each source or sink $v$ on that level, add a short crossing-free edge incident to $v$ on the empty side of that vertex, placing its other endpoint so that it doesn't share its level with any other vertex.  We now slightly perturb all the vertices on the level $I=c$ so that no two vertices are at the same level, without moving them past any other level.
All this can be done while keeping all edges radial, and without introducing any new crossings.
Call the resulting ordered graph $G'$. By Theorem~\ref{thm:radial}, $G'$ has a radial embedding with the same rotation system, and the winding number of every cycle remains unchanged. We can now move all perturbed vertices back to their original levels; the additional edges we added ensure that this is always possible.

\section{Additional Material for Section~\ref{sec:workingEmbeddings}}

The following observation also holds for even radial drawings; since we don't need that stronger result we do not prove it here.

\begin{observation}
\label{obs:outerEmbedding}
In a radial embedding of a graph, there are two outer faces if and only if the graph contains an essential cycle.
\end{observation}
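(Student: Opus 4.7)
The plan is to argue both directions via standard surface topology on the cylinder $\mathcal{C}$, analyzing how $\mathcal{D}(G)$ separates $\mathcal{C}$ and using the identification of winding number with mod-$2$ intersection number against a transverse arc from the bottom boundary circle to the top.

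For the direction ``essential cycle $\Rightarrow$ two outer faces,'' suppose $C \subseteq G$ is an essential cycle, so $\mathcal{D}(C)$ is a simple closed curve on $\mathcal{C}$ with odd winding number, hence non-contractible. Any non-contractible simple closed curve on the cylinder separates $\mathcal{C}$ into two components, one containing each of the boundary circles $0\times\mathbb{S}^1$ and $1\times\mathbb{S}^1$. Since $\mathcal{D}(G)$ is disjoint from $\partial \mathcal{C}$, the face of $\mathcal{D}(G)$ containing $0\times\mathbb{S}^1$ is distinct from the one containing $1\times\mathbb{S}^1$, yielding the lower and upper outer faces.

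For the converse I would prove the contrapositive: if $G$ contains no essential cycle, then both boundary circles lie in the same face of $\mathcal{D}(G)$. Let $F$ denote the face containing $0\times\mathbb{S}^1$ and pick a simple arc $\beta \subseteq \mathcal{C}$ from a point of $0\times\mathbb{S}^1$ to a point of $1\times\mathbb{S}^1$ transverse to $\mathcal{D}(G)$. The set $Z$ of edges of $G$ having exactly one side in $F$ is a $\mathbb{Z}/2$-cycle of $G$: going around each vertex in the rotation, the membership of the successive wedges in $F$ flips an even number of times, so an even number of incident edges lie in $Z$. Since the $\mathbb{Z}/2$-cycle space of $G$ is spanned by its simple cycles and winding numbers are additive mod $2$ under this decomposition, the assumption that no simple cycle is essential forces $Z$ to have even winding number. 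The standard mod-$2$ intersection-number identity on the cylinder---provable by homotoping $\beta$ to a straight vertical segment and computing against horizontal loops---then gives $|\beta \cap \mathcal{D}(Z)| \equiv 0 \pmod 2$. Each crossing of $\beta$ with an edge of $Z$ toggles whether $\beta$'s current position lies in $F$; starting in $F$ and making an even number of toggles, $\beta$ ends in $F$ too, so $1\times\mathbb{S}^1 \subseteq F$.

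The main obstacle is formalizing the intersection-number identity together with the check that $Z$ is even at every vertex; both are routine in surface topology, but careful bookkeeping of faces, rotations, and the $\mathbb{Z}/2$-cycle-space interpretation of the face boundary is needed to turn the sketch into a fully rigorous argument.
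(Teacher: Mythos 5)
Your argument is correct, but note that the paper does not actually prove this observation at all: it explicitly declines to prove even the stronger statement for even radial drawings and treats the embedding version as a standard fact (presumably via the classification of simple closed curves in the annulus: an embedded cycle of odd winding number is core-parallel and separates the two boundary circles, and an embedded graph separates them only if it contains such a cycle). So there is no paper proof to match, and what you supply is a legitimate self-contained substitute. Your first direction is exactly the standard topological fact; your second direction is the more interesting contribution: defining $Z$ as the set of edges with exactly one side on the face $F$ containing $0\times\mathbb{S}^1$, checking $Z$ has even degree at every vertex by counting corner flips in the rotation, decomposing $Z$ into cycles, and using the mod-$2$ identity between winding parity and crossings with a bottom-to-top arc. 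This is sound, and pleasantly consistent with tools the paper uses elsewhere (the two-coloring argument behind Lemma~\ref{lem:windingNumberInvariance}, and the cycle-space/cut-space parity argument in the proof of Lemma~\ref{lem:weaklyEssential}). Two small simplifications: you can take $\beta$ to be a vertical arc $I\times\{s_0\}$ for a generic $s_0$ avoiding vertices and tangencies, which makes the intersection-parity identity immediate from the paper's definition of winding number and removes the need for any homotopy-invariance discussion; and the transversality requirement should explicitly include that $\beta$ misses the vertices. Finally, be aware that your second direction leans on genuine faces (each side of an edge lies in a single face), so it does not automatically give the even-drawing strengthening the paper alludes to; that would require rephrasing everything in terms of facial walks and two-colorings.
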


We do need to prove Lemma~\ref{lem:windingNumberInvariance}: Let $C$ be a cycle with maximum vertex $v$ and minimum vertex $u$, and let the two paths between $v$ and $u$ start and end with $e, e'$ and $f, f'$ respectively. Two-color the complement of $C$.  Traverse the path in $C$ which begins with $v$ and $e$ and ends with $e'$ and $u$.  At the beginning, the colored region to the right includes the
concave wedge at $v$. Since $C$ is an even drawing, the color immediately to the right
will be the same as we begin and end our path traversal.  At the end, the
colored region to the right includes the concave wedge at $u$ if and only if
$e'<_u f'$.  The concave wedges of $C$ at $u$ and $v$ have the same
color if and only if the winding parity of $C$ is even.

\section{Additional Material for Section~\ref{sec:workingStrong}}

We first prove Lemma~\ref{lem:boundedStrong} showing how to go from bounded to radial drawings.

\begin{proof}[of Lemma~\ref{lem:boundedStrong}]
 It is sufficient to show how to redraw any particular edge $e = uv$ radially without changing the remainder of the drawing, so that the crossing parity between $e$ and each other edge is unchanged.
While keeping $I(e)=[u,v]$ and the rotation system fixed,
we continuously deform $e$ so that its projection to $I$ becomes injective.
As $e$ is deformed, it will pass through some vertices an odd number of times;
call this set of vertices $S$. To reestablish the original crossing parities between
$e$ and all other edges, we need to perform $(e,w)$-switches for every vertex $w \in S$.
We can do so by deforming $e$ inside $
[w-\epsilon,w+\epsilon]\times \mathbb{S}^1$, so that $e$ remains radial; any additional
crossing with $e$ will come in pairs, which does not alter any crossing parities.
\qed\end{proof}

The following result is a simple corollary of Lemma~\ref{lem:boundedStrong}.

\begin{corollary}
\label{cor:augmentingEmbedding}
In a radial embedding of a connected ordered graph $G$ we can subdivide any face $f$ by an edge joining its maximum
with its minimum while keeping the embedding radial. Moreover, we can subdivide an outer face so that
the outer face contains exactly one local minimum and maximum.
\end{corollary}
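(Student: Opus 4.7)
The plan is to produce the required radial edge inside the face $f$ by first drawing any bounded simple curve inside $f$ from the minimum vertex $m$ to the maximum vertex $M$ of the boundary walk, and then straightening it into a radial curve while staying inside $f$. Such a bounded curve exists because the boundary walk of $f$ contains $m$ and $M$ and its $I$-coordinates lie in $[m,M]$, so I can trace along one side of the walk just inside $f$. Since the curve is drawn in the open region $f$, it does not cross any existing edge of the embedding.

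To make this curve radial, I would argue directly rather than invoking Lemma~\ref{lem:boundedStrong} as a black box, since that lemma only preserves crossing parities and could in principle introduce pairs of crossings with other edges. The key point is that the height function $I$ on the cylinder has no critical points in $f$ (the interior of $f$ contains no vertices of $G$), so for every $h \in (m,M)$ the horizontal slice $\{h\}\times \mathbb{S}^1 \cap f$ is a nonempty union of open arcs. Because $f$ is topologically a disk (if inner) or a disk or annulus (if outer), one can select a continuous one-parameter family of such arcs with $h$ varying from $m$ to $M$; concatenating this family produces a monotone (hence radial) simple curve in $\overline{f}$ from $m$ to $M$. Using this curve as the new edge subdivides $f$ as required, and the rest of the embedding is untouched.

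For the moreover statement, I would iterate this construction on an outer face. After subdividing by the radial edge $mM$, one of the two resulting faces (the one incident to $0\times \mathbb{S}^1$ or $1\times \mathbb{S}^1$) is the new outer face; its boundary is the union of the monotone edge $mM$ and a subwalk of the old outer boundary. If that subwalk still contains an internal local min or max, I reapply the first part of the corollary to the new outer face. Each iteration strictly decreases the number of local extrema on the outer boundary (because the new side $mM$ is monotone while the discarded side contributed at least one extremum to the original boundary), so after finitely many iterations the outer face boundary has exactly one local minimum and one local maximum, namely its current $m$ and $M$.

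The main obstacle I anticipate is the geometric/topological step establishing the existence of the monotone curve inside $f$: heuristically clear, but it requires ruling out pathological configurations of level sets of $I$ inside $f$. The argument hinges on the fact that $f$ is a disk or annulus and that $I$ has no critical values in the interior of $f$, so that the horizontal slices vary continuously in an essentially one-dimensional way between $m$ and $M$. Everything else — the iteration for the outer face and the bookkeeping of which of the two new faces is outer — is then straightforward.
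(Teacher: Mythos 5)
There is a genuine gap: your central geometric claim --- that in the \emph{given} radial embedding every face $f$ contains a monotone curve from its minimum $m$ to its maximum $M$ --- is false, and the ``disk or annulus, no critical points'' heuristic does not save it, because the local extrema of the boundary walk act exactly like critical points for $I$ restricted to $\overline{f}$. Concrete counterexample: take a thin strip around a zigzag path with corners at heights $0$, $1$, $0.2$, $1.2$ (a ``staircase pocket''), bounded by a hexagon $A,B',C',D,C'',B''$ all of whose edges are radial; the strip is a face with minimum $A$ (height $0$) and maximum $D$ (height $1.2$), yet every curve inside the face from $A$ to $D$ must climb to height about $1$ near the first corner and then descend to about $0.2$ near the second, so no radial chord exists in this drawing. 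Your proposed ``continuous one-parameter family of level arcs'' breaks precisely at the boundary local maximum near the first corner, where the selected arc dies and the only arcs at greater heights lie in a different component of the level set. The corollary is nevertheless true because it does not promise to keep the original drawing: the paper draws a \emph{bounded} (non-radial) edge along the face boundary, applies Lemma~\ref{lem:boundedStrong} to make it radial while only preserving crossing parities (so the result is an \emph{even} drawing, not yet an embedding), and then invokes the weak Hanani--Tutte theorem (Theorem~\ref{thm:radiall}) to redraw everything crossing-free with the same rotation system and leveling. Your stated reason for refusing Lemma~\ref{lem:boundedStrong} --- that it may introduce pairs of crossings --- overlooks exactly this second step, which is the key idea you are missing; without some redrawing device of this kind the statement cannot be proved, as the example shows.

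Your argument for the ``moreover'' part is also not sound as written. After the first subdivision the new outer face is bounded by the monotone edge $e_1$ and one subwalk $S$ of the old boundary; when you subdivide again by an edge $e_2$ from $m$ to $M$, the face containing the boundary circle may be the one bounded by $e_2$ and $S$ rather than by $e_2$ and $e_1$, in which case the number of local extrema does not decrease and your iteration makes no progress --- your claim that ``the discarded side contributed at least one extremum'' fails when the discarded side is the previously added monotone edge. Since you have no control over which of the two faces ends up outer (and, after the forced redrawing discussed above, even less so), you need an argument for this. The paper instead adds an edge alongside \emph{each} of the two subwalks between the minimum and the maximum in one step, so that the two new edges bound a $2$-face $f'$, and then uses Theorem~\ref{thm:radial} (which preserves winding-number parity) together with Lemma~\ref{lem:windingNumberInvariance} and the unchanged rotation system to conclude that the corresponding face in the new embedding is still an outer face; that is the step your proposal would have to replace.
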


\begin{proof}
Given a face $f$
we add a bounded edge $e$ to the radial embedding drawn along
the boundary of $f$ in the interior of $f$ so that $e$ joins a minimum
with a maximum of $f$. An application of Lemma~\ref{lem:boundedStrong}
and Theorem~\ref{thm:radiall} then concludes the proof of the first part of the statement.

For the second part, if $f$ is an outer face then its boundary $W$ is a facial walk which
can be broken into two sub-walks between a minimum and a maximum of $f$. We can add an
edge in $f$ drawn alongside each sub-walk, unless the sub-walk already consists of a single edge only,
in which case we use that edge.
These edges form a walk $W'$ that bounds a 2-face $f'$ which is now the outer face instead of $f$.
We apply Lemma~\ref{lem:boundedStrong}
and Theorem~\ref{thm:radial} to get an embedding that is radial.
Since the rotation system is unchanged and the graph is connected, $W'$ still bounds a face $f_*'$ in the new embedding.
The winding number of $W'$ is the same, so by Lemma~\ref{lem:windingNumberInvariance},
$f_*'$ is essential if and only if $f'$ was essential.
Then the rotation ensures that $f_*'$ is an outer face, just as $f'$ was.
\end{proof}

Lemma~\ref{lem:oddCycleStrong} helps us deal with multiple components.

\begin{proof}[of Lemma~\ref{lem:oddCycleStrong}]
Suppose $I(P)$ contains $I(C)$. We can then find a vertex $u$ on $P$ above $C$ and a vertex $v$ on $P$ below $C$. Thus, the sub-path of $P$ between $u$ and $v$, and hence,
 an edge of $P$ on the sub-path between $u$ and $v$
intersects an edge of $C$ an odd number of times, which is a contradiction.
\qed\end{proof}

Lemma~\ref{thm:pendant} deals with pendant edges.

\begin{proof}[of Lemma~\ref{thm:pendant}]
In $\mathcal{D}(G)$, we erase $v_1$, $v_n$, and a portion of each of their incident edges to
create the new pendant edges, with new endpoints at distinct levels (ordered arbitrarily).
That does not change crossing parity between any pair of edges, it is possible
that two oddly-crossing edges sharing an endpoint $v_1$ or $v_n$ were replaced by two edges
that share no endpoint, creating an oddly-crossing pair of independent edges.
Thus, it suffices to redraw the new pendant edges so that each pair of them crosses evenly
(and then the second part of the claim will follow as well).

Let $v_1' < \ldots <v_k' < v_2 <\ldots < v_{n_1} < v_1''<\ldots v_l''$ denote the
vertices of $G'$ in order.
For each edge of the form $v_i'v$ that crosses $v_k'v$ oddly, we can make these
edges cross evenly by performing an edge-vertex switch of $v_i'v$ with $v_k'$.
Once $v_k'v$ crosses every edge of the form $v_i'v$ evenly, we can repeat the
procedure with every edge $v_i'v$ that crosses $v_{k-1}'v$ oddly---which won't
affect $v_k'v$ since $i<k-1$---and then similarly for $v_{k-2}v$, and so on.
We can repeat a similar procedure for new pendant edges of the form $v_1''v$,
then $v_2''v$, and so on.
Thus, eventually by edge-vertex switches we easily regain a desired independently
even drawing $\mathcal{D}'(G')$.
%
That $\mathcal{D}'(G')$ is supported by $\mathcal{D}(G')$ then follows by the construction in a straightforward way.
\qed\end{proof}

Using Lemma~\ref{thm:pendant} we can establish part~(i) of Theorem~\ref{thm:radialStrong},
which is  Lemma~\ref{lem:weaklyEssential}.

\begin{proof}[of Lemma~\ref{lem:weaklyEssential}]
Let $G'$ denote a graph obtained from $G$ as in Lemma~\ref{thm:pendant}.
We deform the radial drawing of $G'$ as in the proof of Lemma~\ref{thm:pendant}
so that it is independently even. Obviously, $G'$ has no essential cycle.

Let $e=vv'$ be a new edge for which $v <\min G'\le \max G' < v'$.
Draw the edge $e$ radially, so that its interior does not pass over any vertex, but otherwise arbitrarily.
Let $E'$ denote the set of edges in $G'$ crossed by $e$ oddly.
Since $G'$ has no essential cycle each cycle $C$ of $G'$ is crossed by $e$ evenly, and hence, number of edges of $E(C)$
crossed by $e$ oddly is even.
In a graph, the cycle space  is orthogonal to the cut space over $\mathbb{Z}/2\mathbb{Z}$~\cite[Section 1.9]{D05}.  It follows that $E'$ is an edge-cut of $G'$. Thus, by performing edge-vertex switches of $e$ with
vertices on one side of the cut $E'$, we obtained an independently even drawing of $H$.
Perform edge-vertex switches of $e$ with
the vertices on one side of the cut $E'$; then
$e$ will cross every edge of $G'$ evenly.

The edge $e$ can then be cleaned of crossings by~\cite[Lemma 4]{FPS15}
(where the graph is $G'\cup\{v,v',e\}$).
By cutting the cylinder along $e$, we can conformally
deform $\mathcal{C}\setminus e$ to a subset of the plane,
and levels become parallel line segments, so our radial drawing
becomes an $x$-monotone drawing.
Then the strong variant of the Hanani--Tutte theorem for $x$-monotone drawings~\cite{FPSS12} applies,
giving us an $x$-monotone embedding of $G'$.  Finally, we can extend the pendant edges in the drawing
to reach two (new) shared endpoints $v_1,v_n$, giving us an $x$-monotone embedding of $G$.
\qed\end{proof}

\section{Additional Material for Section~\ref{sec:CompCount}}

We first show that a minimal counterexample is connected, Lemma~\ref{lem:connectedStrong}.

\begin{proof}[of Lemma~\ref{lem:connectedStrong}]
Suppose that $G$ is not connected.  Then each component of $G$ can be radially embedded as described in the statement of Theorem~\ref{thm:radialStrong}.
For convenience, let $\mathcal{D'}(G)$ be the drawing obtained by simply combining the radial embeddings of
the components of $G$.

Define an
ordering on closed real intervals letting $[a,b]\leq [c,d]$ if
$a\leq c$ and $b\leq d$.  (The inequality is strict if $a<c$ or $b<d$.)
This ordering extends to any subsets $S_1,S_2$ of a radially drawn graph $G$
by using their projections $I(S_1), I(S_2)$.  That is, we say $S_1\leq S_2$
when $\min I(S_1)\le \min I(S_2)$ and $\max I(S_1)\le \max I(S_2)$, and
if either is a strict inequality then $S_1<S_2$.

Let $H_1,H_2$ be any disjoint essential subgraphs of $G$ in $\mathcal{D}'(G)$.
For a contradiction, suppose that $I(H_1)\subseteq I(H_2)$.
But $H_1$ contains an essential cycle in $\mathcal{D}'(G)$, and by
part~(ii) of Theorem~\ref{thm:radialStrong}, $H_1$ also
contains an cycle $C$ that is essential in $D$.  But
then $I(D)\subseteq I(H_2)$, contradicting Lemma~\ref{lem:oddCycleStrong}.
Thus, $I(H_2)$ does not contain $I(H_1)$, and by symmetry,
$I(H_1)$ does not contain $I(H_2)$. Since $H_1$ and $H_2$ share
no vertices, it must be that $H_1<H_2$ or $H_2<H_1$.

For each essential component $H$ of $G$ in $\mathcal{D}'(G)$,
let $W_L(H)$ and $W_U(H)$ be its lower and upper facial walks;
these are each essential in $\mathcal{D}'(G)$ by Observation~\ref{obs:outerEmbedding}.
By the previous paragraph, for any distinct essential components $G_1,G_2$ of $G$ in $\mathcal{D}'(G)$,
either $W_L(G_1)\leq W_U(G_1)<W_L(G_2)\leq W_U(G_2)$ or $W_L(G_2)\leq W_U(G_2)<W_L(G_1)\leq W_U(G_1)$.
Therefore there is a total ordering of the essential components $G_1,\ldots,G_k$ of $G$ in $\mathcal{D}'(G)$
so that
\[ W_L(G_1)\leq W_U(G_1)<W_L(G_2)\leq W_U(G_2)<\ldots<W_L(G_k)\leq W_U(G_k).\]

Now apply Corollary~\ref{cor:augmentingEmbedding} to each embedded essential component,
adding edges to ensure that every facial walk contains an edge from its minimum to
its maximum.  As each face is replaced by two faces with the same maximum and minimum,
each facial walk $W$ is replaced by facial walks $W_1,W_2$ such that $I(W)=I(W_1)=I(W_2)$.
In particular, after adding all these edges, the upper and lower facial walks of
essential components still satisfy
\[ W_L(G_1)\leq W_U(G_1)<W_L(G_2)\leq W_U(G_2)<\ldots<W_L(G_k)\leq W_U(G_k).\]
Each of these walks consists of two radially drawn paths between its maximum and
its minimum, which allows us to deform their embeddings to get a combined radial
embedding of their union:
the key is to align the vertices of maximum $I$-coordinate in each component so
that the $\mathbb{S}^1$-coordinates are all the same, and likewise for all the minimum ones.

\medskip

Next, suppose that $G_0$ is a non-essential component of $G$.
By Lemma~\ref{lem:weaklyEssential}, $G_0$ has an $x$-monotone embedding.
Then it can be
deformed to be very ``skinny''---arbitrarily closed to a straight line segment---and radial.
Hence, it is enough to show that there is room in the embedding
of $G\setminus G_0$ (whose existence is due to the choice of $G$)  to insert this skinny embedding of $G_0$.
We can inductively assume that other non-essential components are already drawn to be skinny,
so they cannot get in the way; hence we can assume that all other components are essential.

If $I(G_0)$ is contained in $I(W_L(G_i))$ for some essential component $G_i$ of $G$,
then there is room just below the cycle
$W_L(G_i)$ to insert a skinny embedding of $G_0$.
If $I(G_0)$ is contained in $I(W_U(G_i))$, then $G_0$ can be inserted just above the cycle $W_U(G_i)$.
$I(G_0)$ contains neither $W_L(G_i)$ nor $W_U(G_i)$ because those walks are essential in $\mathcal{D}'(G)$,
part~(ii) of Theorem~\ref{thm:radialStrong} and Lemma~\ref{lem:oddCycleStrong} would give a contradiction.
Therefore, $G_0$ must fit (strictly) between two consecutive elements of the ordering
\[ W_L(G_1)\leq W_U(G_1)<W_L(G_2)\leq W_U(G_2)<\ldots<W_L(G_k)\leq W_U(G_k).\]
If $W_U(G_i)<G_0<W_L(G_{i+1})$ for some $i$, then there is room to insert a skinny $G_0$
into the embedding of essential components between $G_i$ and $G_{i+1}$.
Similarly, if $G_0<W_L(G_1)$ or $G_0>W_U(G_k)$.  Thus, we may assume that
$W_L(G_i)<G_0<W_U(G_i)$ for some essential component $G_i$.

Let $m_0:=\min I(G_0)$ and $M_0:=\max I(G_0)$.
Let $E'$ be the set of edges $e$ in $G_i$ with $M_0\in I(e)$.
Then $G_i-E'$ is the disjoint union of the two subgraphs
$G_i^-,G_i^+$ induced by vertex sets $\{v\in V(G_i): v<M_0\}$
and $\{v\in V(G_i): v>M_0\}$, respectively.

If $G_i^-$ is non-essential, then a skinny $G_0$ can be inserted into its lower outer face.
That won't be changed by adding $E'$ and then $G_i^+$ to the embedding, so $G_0$ can fit
in the lower outer face of $G_i$, which contradicts $W_L(G_i)<G_0$.
So we can assume that $G_i^-$ is an essential subgraph.  Let $W_U$ be its upper facial walk.
If $\min I(W_U)<m_0$, then there is room to insert $G_0$ in the embedding
of $G_i^-\cup E'$, and hence in $G_i$.  If $\min I(W_U)>m_0$ then
$I(G_0)$ contains $I(W_U)$, contradicting Lemma~\ref{lem:oddCycleStrong}
due to part~(ii) of Theorem~\ref{thm:radialStrong}.
\qed\end{proof}

We claimed three results for connected components $B$ of $G\setminus v$ we still have to prove.

\begin{proof}[of Lemma~\ref{lem:1}]
Suppose that $B$ contains no essential cycle and $|V(B)|\not=1$.
Let $B'$ be the subgraph induced by $V(B)\cup\{v\}$; i.e., $B'$ contains $B$, $v$, and all edges from $v$ to $B$.
By Lemma~\ref{lem:weaklyEssential} we obtain
an $x$-monotone embedding $\mathcal{E}(B')$ of $B'$.
Let $vPw$ be a path in $B'$ from $v$ to $\max B=w$.
Let $G'$ be a graph obtained from $G$ by replacing $B'$ with a single edge $e$ from $v$ to $w$. Let $\mathcal{D}(G')$ denote the drawing of $G'$
inherited from $\mathcal{D}(G)$ such that $\mathcal{D}(P)=\mathcal{D}(e)$. Thus, the drawing of $e$ in  $\mathcal{D}(G')$   is obtained by suppressing the interior
vertices of $P$.
The drawing  $\mathcal{D}(G')$ may not be radial due to $e$, but it is still bounded
and independently even.
By Lemma~\ref{lem:boundedStrong}, we obtain an independently even
radial drawing $\mathcal{D}'(G')$ of $G'$.
By the minimality of $G$ we get a radial embedding $\mathcal{E}'(G')$ of $G'$.
Finally, we replace $e$ in $\mathcal{E}'(G')$ by a ``skinny''
copy of $\mathcal{E}(B')$ intersecting   $\mathcal{E}'(G')$ in $v$ thereby obtaining a radial embedding of $G$.
Note that the obtained embedding of $G$ is supported by $\mathcal{D}(G)$ and that concludes the proof.
\qed\end{proof}

\begin{proof}[of Lemma~\ref{lem:3}]
Assume the contrary.
Let $G_1$ denote the union of connected components
of $G\setminus v$ contained in the subgraph of $G$ induced by $\{u\in V|u>v\}$.
Let $G_2$ denote the union of connected components
$G\setminus v$ not included in $G_1$.
Let $G_1'$  denote the subgraph of $G$ induced by $V[G_1]\cup\{v\}$.
Let $G_2'$ denote the subgraph of $G$ obtained as the union
of the subgraph of $G$ induced by $V[G_2]\cup\{v\}$ and
an edge $e$ between $v$ and $\max G_1'=w$.
Since $v\not=v_1$, $G_2'$ is non-empty, and $G_1'$ is also non-empty
due to the existence of $B$.
Also $G_2'$ is not the whole $G$ since $G_1'$ contains a cycle.

We radially embed $G_1'$ and $G_2'$ using the minimality of $G$.
Let $\mathcal{E}_1(G_1')$ and  $\mathcal{E}_2(G_2')$ denote the
obtained embeddings both of which are supported by the corresponding
restrictions of $\mathcal{D}(G)$.
Let $W_1$ denote the lower facial walk of $G_1'$ in $\mathcal{E}_1(G_1')$. We subdivide $e$ in $G_2'$ by a vertex $u$ right below $\max W_1$. By part~(ii) of Theorem~\ref{thm:radialStrong} and Lemma~\ref{lem:oddCycleStrong}, $\max W_1> \max (G_2'\setminus e)$,
since every connected component $G_c$ of $G\setminus v$ in $G_2$ has $\min G_c<v$. Let $e''$ denote the edge $uv$. This holds even when $W_1$ is not essential,
since $B$ is essential.
Let $G_2''$ be obtained from $G_2'$ by deleting $uw$ and adding an edge $e'=uv$ forming with $e''$
a pair of multi-edges.
Let $\mathcal{E}_2(G_2'')$ be an embedding of $G''$  obtained from  $\mathcal{E}_2(G_2)$  such that the upper facial walk of $G_2''$ consists of $e'$ and $e''$, and it is essential.
Since  $\max W_1> \max (G_2'\setminus e)$, we have $\max (G_2'') = \max e'  < \max W_1$.
Moreover, $\min e' = \min W_1$.
 Thus, by Corollary~\ref{cor:augmentingEmbedding} we
can combine  $\mathcal{E}_1(G_1')$ and $\mathcal{E}_2(G_2'')$
thereby obtaining a radial embedding of a super-graph of $G$.
By deleting additional edges from the obtained embedding we obtain a radial embedding of $G$ that is supported by $\mathcal{D}(G)$.
This contradicts the choice of $G$.
\qed\end{proof}

\begin{proof}[of Lemma~\ref{lem:2}]
We proceed similarly as in the proof of Lemma~\ref{lem:1}.
Suppose that $B$ contains no essential cycle.
Let $B'$ be the subgraph induced by $V(B)\cup\{v,w\}$; i.e., $B'$ contains $B$, $v$, $w$, and all edges from $v$ and $w$ to $B$.
By Lemma~\ref{lem:weaklyEssential} we obtain
an $x$-monotone embedding $\mathcal{E}(B')$ of $B'$.
Let $vPw$ be a path in $B'$ from $v$ to $w$.
Let $G'$ be a graph obtained from $G$ by deleting vertices of $B$ and their incident edges, if $G$ contains an edge $e=vw$.
Otherwise, let $G'$ be obtained from $G$ by replacing $B'$ with a single new edge $e$ from $v$ to $w$. Let $\mathcal{D}(G')$ denote the drawing of $G'$
inherited from $\mathcal{D}(G)$ such that $\mathcal{D}(P)=\mathcal{D}(e)$ if $e$ was not already present in $G$. Thus, the drawing of $e$ in  $\mathcal{D}(G')$   is obtained by suppressing the interior
vertices of $P$.
The drawing  $\mathcal{D}(G')$ may not be radial due to $e$, but it is still bounded
and independently even.
By Lemma~\ref{lem:boundedStrong}, we obtain an independently even
radial drawing $\mathcal{D}'(G')$ of $G'$.
By the minimality of $G$ we get a radial embedding $\mathcal{E}'(G')$ of $G'$.
Finally, we replace $e$ in $\mathcal{E}'(G')$ by a ``skinny''
copy of $\mathcal{E}(B')$ intersecting   $\mathcal{E}'(G')$ in $v$ and $w$ thereby obtaining a radial embedding of $G$.

It remains to show that the obtained radial embedding of $G$ is supported by $\mathcal{D}(G)$.
Suppose that $C$ is an essential cycle in our embedding of $G$.
Then $C$ is not contained in $B$, so either
$C\cap B$ is a path between $v$ and $w$, or $C$ does not
intersect $B$ at all.  In the former case, replace that path
by the edge $e$ to get an essential cycle $C'$ in the embedding $\mathcal{E}'(G')$ of $G'$;
otherwise $C$ is an essential cycle in $G'$ so just let $C'=C$.
The embedding $\mathcal{E}'(G')$ is supported by $\mathcal{D}(G')$,
so there is an essential cycle $C''$ in $\mathcal{D}(G')$ such that $I(C'')\subseteq I(C')$.
If $C''$ contains $e$ then replace $e$ by $P$ to get a new cycle $C'''$ in $G$.
Since $P$ can be smoothly deformed to $e$ within the cylinder, $C'''$ is essential
in the original drawing of $G$.  If $C''$ does not not contain $e$ then just let $C'''=C''$.
Then we have $I(C''')=I(C'')\subseteq I(C')=I(C)$, which proves that the
obtained radial embedding of $G$ is supported by $\mathcal{D}(G)$.
\qed\end{proof}

\section{Additional Material for Section~\ref{sec:proofStrong}}

We need to supply the proof of Lemma~\ref{lem:aux1}. We refer the reader to Figure~\ref{fig:aux1} for an illustration of the set-up.

\begin{proof}[of Lemma~\ref{lem:aux1}]
Suppose not.
Without loss of generality, we can choose the three paths to be minimal; then only the last vertex of $Q$ and $Q'$
are in the region $I>\max P$ and only the last vertex of $P$ is in the region $I<v$.
We can find a curve $C^*$ in the region $I>\max P$ from $\max Q$ to $\max Q'$ so that
concatenating $C^*$ with $Q$ and $Q'$ forms a non-essential closed curve $C$.
The end vertex $u$ of $P$ different from $v$ is in the exterior of $C$, since $u<\min C$.

If $g$ is the first edge of $P$, then since $g$ is between $e$ and $f$ in the upper rotation at $v$,
$P$ reaches $v$ from the interior of $C$.  Since $P$ ends in the exterior of $C$, it must be that edges
of $P$ and $C$ cross an odd number of times.  But $g$ crosses $e$ and $f$ oddly and no other pairs of
edges from $P,C$ cross oddly; a contradiction.

If $e$ or $f$ is the first edge of $P$, then $P$ reaches $v$ from the exterior of $C$. Then $P$ and $C$
must cross evenly, but $e$ and $f$ (one in $P$ and the other in $C$) cross oddly and no other pair of
edges from $P,C$ crosses oddly; a contradiction.
\qed \end{proof}

\end{document}